\begin{document}
\begin{spacing}{1.0}

\title{Mining CFD Rules on Big Data}



%
%
%
%


\author[a]{Hongzhi Wang}
\author[a]{Mingda Li}
\author[a]{Jiawei Zhao}

\affil[a]{Harbin Institute of Technology,wangzh@hit.edu.cn}
       
%


\maketitle

\begin{abstract}
	
Current conditional functional dependencies (CFDs) discovery algorithms always need a well-prepared training data set. This makes them difficult to be applied on large datasets which are always in low-quality. To handle the volume issue of big data, we develop the sampling algorithms to obtain a small representative training set. For the low-quality issue of big data, we then design the fault-tolerant rule discovery algorithm and the conflict resolution algorithm. We also propose parameter selection strategy for CFD discovery algorithm to ensure its effectiveness. Experimental results demonstrate that our method could discover effective CFD rules on billion-tuple data within reasonable time.
\end{abstract}

\section{Introduction}
	Nowadays, with the accumulation of data, the size of database becomes larger and larger. At the same time, due to the difficulty in manually maintenance and the variance type of data sources, big data contains quality problems with higher possibility, making it difficult to be used. Therefore, big data cleaning techniques are crucial for its effective usage.


	CFDs~\footnote{Fan2008Conditional} are powerful tools for data cleaning. They can find the hidden relation among items. Such relation can help us to find dirty tuples, which can be modified accordingly. The functional dependencies (FDs) can be considered as special forms of CFDs. High-quality rules are the core of effective data cleaning systems with CFDs.

High-quality CFD discovery on big data brings two challenges. On one hand, the volume of big data require high-efficiency and high-scalability discovery algorithm, whose complexity is linear or sub-linear to the data size. On the other hand, big data may involve more data quality problems. Thus, a clean training set can hardly be prepared for CFD discovery. Thus, fault-tolerant CFD discovery approach is in demand.

Due to its importance, some CFD discovery algorithms have been proposed. However, none of them could trakle the challenges.
 Most of existing methods such as \cite{Chen2009Analyses} discover high-quality rules with data mining algorithms on a small but clean data set efficiently. However, these approaches are unsuitable for big data cleaning due to the lack of representative data set. Other methods for discovering CFDs on dirty data set such as \cite{Bollob1998Modern} need many passes over the data set to find approximate CFDs, but for a big data set which cannot be loaded into the main memory, it can hardly work.

Therefore, it is necessary to develop a scalable method to mine high-quality rules from big data with size larger than the main memory. To achieve this goal, we design a scalable and systemic algorithm. We sample from the big data firstly to obtain effective training set within one pass of scan. Then, we discover CFDs based on sampling results. The reason for sampling before discovery is that, without sampling, we have to scan the big data set for many times to mine patterns and calculate support for finding CFDs. This is time-consuming, especially when data set is larger than memory. Another purpose for sampling is to filter dirty items and keep clean ones. Following example shows the necessary of sampling.
	
	\textbf{Example 1:} Table \ref{tab: example 1} is changed from the example in \cite{Golab2008On}. It is about a customer with the basic information (country code (CC), area code (AC), phone number (PN)), name (NM), and address (street (STR ), city (CT), zip code (ZIP )).

\begin{table*}
	\centering
	\caption{Example 1}
	\label{tab: example 1}
	\begin{tabular}{|l|l|l|l|l|l|l|l|}
		\hline
		& CC & AC & PN & NM & STR & CT & ZIP \\
		\hline
		$t_1$ & 01 & 108 & 11080176 & Ian & Three Ave. & MH & 2221 \\
		\hline
		$t_2$ & 01 & 108 & 11080176 & Jack & Tree Ave. & MH & 2221 \\
		\hline
		$t_3$ & 01 & 112 & 11120101 & Joe & High St. & NYC & 02ED1 \\
		\hline
		$t_4$ & 01 & 108 & 11120101 & Jim & Elm Str. & MH & 2221 \\
		\hline
		$t_5$ & 40 & 1069 & 41690101 & Ben & High St. & EDI & 02ED1 \\
		\hline
		$t_6$ & 40 & 1069 & 41690177 & Ian & High St. & EDI & 02ED1 \\
		\hline
		$t_7$ & 40 & 108 & 41690177 & Ian & Port PI & MH & 02WB2 \\
		\hline
		$t_8$ & 01 & 1069 & 11120101 & Sean & 3rd Str. & UN & 2233 \\
		\hline
		$t_9$ & 4731 & 108 & 233323 & Steve & Low St. & SYD & XXXX \\
		\hline
		$t_{10}$ & 4731 & XXXX & 3456123267 & Steve & Low St. & LON & 2112E \\
		\hline
		$t_{11}$ & 8E11 & 979797 & 678345 & Laola & 4th St. & MH & 322233 \\
		\hline
	\end{tabular}
\end{table*}

From the table, we can find the traditional FD set $f_{1-2}$:
\begin{align}
	&f_1:[{\rm CC},{\rm AC}]\rightarrow {\rm CT}\\
	&f_2:[{\rm CC},{\rm AC},{\rm PN}]\rightarrow {\rm STR}
\end{align}

\noindent and the CFD set $\beta_{1-5}$:
\begin{align*}
&\beta_1:([{\rm CC},{\rm ZIP}]\rightarrow {\rm STR},(40,\underline{\quad}\parallel\underline{\quad}))\\
&\beta_2:([{\rm CC},{\rm AC}]\rightarrow {\rm CT},(01,112\parallel{\rm NYC}))\\
&\beta_3:([{\rm CC},{\rm AC}]\rightarrow {\rm CT},(01,108\parallel{\rm MH}))\\
&\beta_4:([{\rm CC},{\rm AC}]\rightarrow {\rm CT},(40,1069\parallel{\rm EDI}))\\
&\beta_5:([{\rm CC},{\rm NM}]\rightarrow {\rm STR},(4731,{\rm Steve}\parallel{\rm Low St.}))
\end{align*}

However, if the data set is about the customers in America, then the dirty items $t_9$ and $t_{10}$ with CT-SYD and LON which are not in the US will have little similar items. With the sampling method to find representative samples, we need to ignore them. Moreover, we neglect $t_{11}$, because we cannot find items holding more than two attributes same with the attributes of $t_{11}$, which help us little to find a CFD showing the hidden relation among most of items. Therefore, the following rule set $\varphi_{1-2}$ could be discovered from data set without $t_9$,$t_{10}$,$t_{11}$.
\begin{align*}
&\varphi_1:([{\rm CC},{\rm ZIP}]\rightarrow {\rm STR},(40,\underline{\quad}\parallel\underline{\quad}))\\
&\varphi_2:([{\rm AC}]\rightarrow {\rm CT},(\underline{\quad}\parallel\underline{\quad}))
\end{align*}

If we clean the data set based on the rule set $\beta_{1-5}$, the $t_9$ and $t_{10}$ will conform to $\beta_5$ and be treated as clean items. However, with the new rule set, they will become dirty. Meanwhile, since the attributes of the two new rules are less, we do not need to compare for many times, which is time-consuming for big data. Therefore, the data cleaning with only two rules $\varphi_1$ and $\varphi_2$ is more efficient than that with 5 rules  $\beta_{1-5}$.

From Example 1, we can find the selection of training set is important. Meanwhile, for big data, it is only possible to use a small set of items for rule discovery. Thus, it is significant to select a representative training set from big data. To the big data set with size larger than memory, we attempt to accomplish sampling in one pass as the sampling method for estimating the confidence of CFDs in \cite{Cormode2009Estimating}.

In summary, the developed rule discovery method suitable for big data with size larger than memory requires following features which the existing methods do not have.

\begin{enumerate}
	\item[(1)] A small but representative training set should be selected in one-pass scanning of the data.
	\item[(2)] The method to discover rules from items should tolerate the wrong records in the training set.
	\item[(3)] Due to the tradeoff between effectiveness and efficiency, a mechanism tuning the parameter according to the need of applications should be provided.
\end{enumerate}


Motivated by this, we propose a method for discovering a high-quality CFD set. Such approach could tolerate data quality problems in the data set and meet various requirements from users for a data set with size larger than memory. The contributions of this paper are summarized as follows.

\begin{enumerate}
	\item[(1)] We design BRRSC, a sampling method to obtain a proper training set from CFD discovery within once scanning of data. According to the theoretical analysis and experiments, BRRSC is a sub-linear algorithm suitable for big data.

	\item[(2)] We propose an algorithm DFCFD that could tolerate error data to discover CFDs by our proposed method. DFCFD can be changed according to different data size and the parameter of dirty data set to obtain the best CFD set.

	\item[(3)] To resolve conflicts among the discovered CFD set, we propose a graph-based algorithm with each CFD as a node and the conflict relationship between two CFD as an edge. In this algorithm, the conflict-free CFD set is computed as the maximal weight independent set on the graph.
	
\item[(4)] To meet various requirements for CFD discovery, we design adaptive parameter computation strategy for CFD discovery. We define four dimensions of user requirements. Users are allowed to decide the most important aim in the discovery and set limits for the other three. After that, we propose a multi-objective programming to solve this parameter determination problem.
	
\item[(5)] We verify experimentally the performance and scalability of our algorithm. We compare the time for discovering CFDs and the quality of CFDs with previous methods for different data sizes and parameters. To test the optimality of the parameter selection method, we compare the effectiveness of different choices of parameters using the controlling variable method. Meanwhile, we use the real-life big data to show the effectiveness of our method.
\end{enumerate}


We introduce the preliminary definitions and the framework of our solution in Section~\ref{section: pri} and Section~\ref{section: fra}, respectively. The sampling method is proposed in Section~\ref{section: rrsc}. In Section~\ref{section: bdc}, we develop error-tolerant CFD discovery algorithms and conflicts resolving algorithms. An adaptive parameter selection algorithm is proposed in Section~\ref{section: sel of para}. In Section~\ref{section: exp}, we perform extensive experiments to verify the efficiency and effectiveness of proposed algorithms. Finally, we draw the conclusions in Section~\ref{section: con}.

\section{Priliminary}
\label{section: pri}

In this section, we first review some definitions about CFDs. We then define the problem.

\subsection{Background}

A CFD is a pair $(X\rightarrow A,t_P)$, where $X$ is a set of attributes in the items, $A$ is a single attribute decided by $X$, and $t_P$ is a pattern tuple with attributes in $X$ and $A$. For an attribute $C$ in $X\cup A$, $t_P[C]$ is either a constant or an undetermined variable denoted as ``\underline{\quad}''. We define $X$ and $A$ as LHS and RHS for a CFD, respectively. A pattern tuple ``$\parallel$'' is used to separate the $X$ and $A$ attributes.

We call a CFD as constant CFD if $t_P$ consists of constants only, i.e., $ t_P[A] $ as a constant and $t_P[B]$ as a constant for all $B\in X$. It is called a variable CFD if $t_P[A]$ is ``\underline{\quad}'', and the value of $tP[B]$ depends on that of $ t_P[A] $. As for the general CFDs, they include both of the variable and constant CFDs.

Among the CFD set $\beta_{1-5}$ in Example 1, $\beta_1$ is a variable CFD while the $\beta_{2-5}$ are constant CFDs. In the CFD set $\varphi_{1-2}$, $\varphi_1$ and $\varphi_2$ are both variable CFDs.

When we find CFDs, we should avoid trivial and redundant CFDs to increase efficiency. To achieve this goal, we define the minimal CFDs. A minimal CFD must be a nontrivial, left-reduced CFD firstly.

A CFD $(X\rightarrow A, t_P)$ is trivial when $A\in X$. If a CFD is trivial, it is always correct when the attribute in $ X $ is equal to the same attribute in $ A $. It is always wrong when the equality relationship is not met. Therefore, we only study the nontrivial CFDs in this paper. We call the constant CFD $(X\rightarrow A, (t_P\parallel a))$ a left-reduced CFD if no set of attributes $ Z $ is included in $ X $ to make a new CFD $(Z\rightarrow A, (t_P\parallel a))$. Similarly, we call a variable CFD left-reduced if for any $Z\subset X$, $(Z\rightarrow A, (t_P\parallel a))$ cannot be proved proper, and there is no $t_{P'}[X]$ more general than $t_P[X]$ to make the $(X\rightarrow A, (t_{P'}\parallel a))$ correct. To determine the confidence level of a CFD, we say a tuple support a CFD when it satisfies the condition in $\varphi$.

\begin{figure*}[h]
	\centering
	\includegraphics[width=15cm]{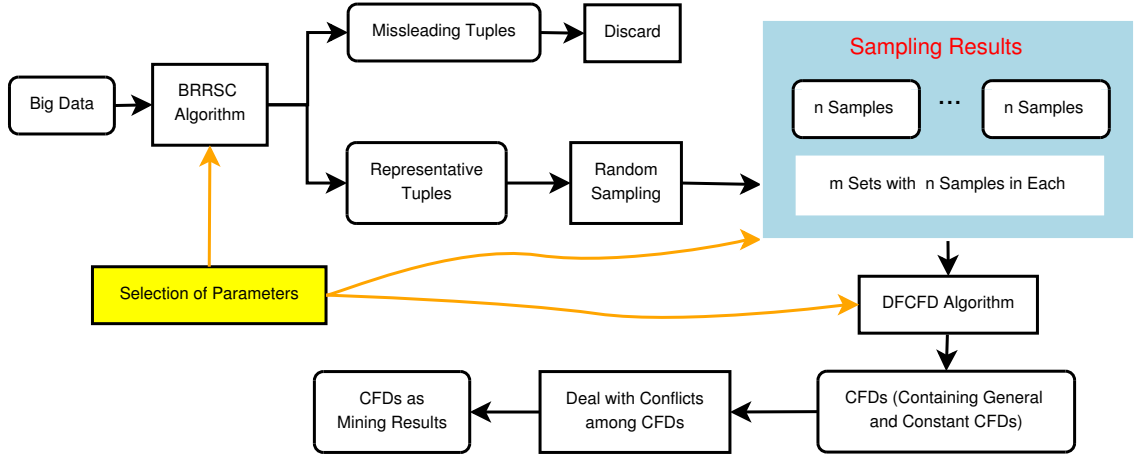}
	\caption{The framework of the whole process}
	\label{fig: framework}
\end{figure*}

\subsection{Problem Definition}

Given a data set which may be quite large, our goal is to find a high-quality CFD set, which contains both constant and variable CFDs. Since for big data, its major part is clean, we regard a CFD set as high-quality when most tuples in big data support it. Meanwhile, a high-quality CFD set should control its CFD number. Thus, we need to discover a CFD set containing minimal number of CFDs with most tuples supporting it. It is difficult to measure the quality when considering both the number of CFDs and supporters. Therefore, in the experiments, we used a standard CFD set discovered on a clean data set. Then, we modified the data set to make it dirty and utilized our method to discover our set of CFDs on it. We give the evaluation of our set of CFDs by comparing them with the standard CFD set.

\section{Framework}
\label{section: fra}

The framework of the proposed method is shown in Figure \ref{fig: framework}. In the working process described by this figure, to obtain a high-quality CFD set from big data, we firstly obtain samples through the algorithm proposed in Section~ \ref{section: one-pass} in one pass scanning. Then, a error-tolerant CFD discovery algorithm in Section \ref{section: dfcfd} is developed to find CFDs from the samples. After that, we establish a weighted undirected graph including CFDs as nodes (Section \ref{section: cal weight}) and add an edge between two CFDs to represent a conflict (Section \ref{section: disc confict}). To deal with the conflicts, we adapt the algorithm in \cite{Bollob1998Modern} for finding a maximal weighted independent set (Section \ref{section: mwid}). Meanwhile, to satisfy different requests from users, we propose a novel method to choose the most suitable parameters for CFD discovery (Section \ref{section: sel of para}). In summary, the proposed system framework is separated into four parts: a sampling algorithm, a error-tolerant dynamical CFD discovery algorithm, a method dealing with conflicts among CFDs and the selection of parameters.

\section{RRSC: Representative and Random Sampling for Cfds}
\label{section: rrsc}

To select a small but representative data set for CFD discovery, we attempt to use sampling method. Although reservoir sampling \cite{Vitter1985Random} could ensure the equal possibility for each tuple to be sampled with unknown size of the whole data, the representativeness of the sample could not be ensured. Thus, inspired by the reservoir sampling, we propose a novel sampling algorithm which calculates the number of the same attributes of samples to decide whether a tuple is suitable. Meanwhile, to ensure our samples represent all kinds of suitable tuples, we choose multiple sets of samples from a big data set. We then find CFDs on each sample set. We then finally synthesize the whole CFD set by modelling all discovered CFDs as a weighted graph and finding the subset with the largest weights. 

We suppose the number of the groups and samples in each group are $n$ and $m$, respectively. In this section, we first propose a multiple-passes scan algorithm in Section~\ref{section: mul-pass}. In this algorithm, we find out $n$ groups of popular items iteratively. This algorithm is divided into two phrases, the first extraction and the $2^{\rm th}$ to $n^{\rm th}$ extractions where $m$ denotes the number of items in each group, since the $2^{\rm th}$ to $n^{\rm th}$ is a process of iteration different from the first extraction. During the $2^{\rm nd}$ to $n^{\rm th}$ extractions where $n$ is the group number, we need to compare samples with both current and original sampling results. However, it is infeasible to scan dataset multiple times for big data. In Section~\ref{section: one-pass}, we will talk about how to perform the iteration in once scan.

\subsection{Multiple-Pass Scan Algorithm}
\label{section: mul-pass}

We start from the discussion of the criteria for selecting or avoiding a tuple included in the sample and then describe the algorithm in Section~\ref{sec:criteria}. The sample is divided into two parts. The first group of $m$ items is obtained primarily as the base, and the $2^{\rm nd}-n^{\rm th}$ groups are sampled in iteratively until all kinds of popular items are sampled. We will discuss these two algorithms in Section \ref{section: frrsc} and Section \ref{section: trrsc}, respectively.

\subsubsection{Tuple Section Criteria}
\label{sec:criteria}

First of all, we should avoid  misleading samples like $ t_9 $, $ t_{10} $ and $ t_{11} $ in Example 1, which are either special or unpopular. A misleading tuple is a tuple with following features:


\begin{enumerate}
	\item[(1)]	If a tuple has at least one incomplete attribute, such as $t_9$ and $t_{10}$, we treat it as a misleading tuple.
	
	\item[(2)]	If we compare the attributes of a tuple $t$ with popular tuples and find that the number of the same attributes is smaller than a threshold $\epsilon$, $t$ is treated as a misleading tuple. $\epsilon$ will be defined according to the method in Section~\ref{section: sel of para}.
\end{enumerate}

Secondly, it is necessary to avoid similar items as to prevent over-fitting. To achieve this goal, we adopt $2^{\rm nd}$ to $n^{\rm th}$ iteration. In the $i^{\rm th}$ sample where $2\leq i\leq n$, we compare it with the samples obtained from the $1^{\rm st}$ to $(i-1)^{\rm th}$ sample. If the number of the same attributes between current item and early results is larger than a threshold, this item is considered too similar for sampling results and given up.

\subsubsection{FRRSC: Sample for the $1^{\rm st}$ Group}
\label{section: frrsc}

\noindent \underline{Algorithm Overview} The $1^{\rm st}$ group is generated by the framework similar as reservoir sampling. The difference is the replacement of sample takes the criteria in Section~\label{sec:criteria} into consideration.

We first include the front $m$ tuples in the sample $S$. For each of the following tuples $t$, we decide whether it is the misleading tuple.
If $t$ is incomplete, we refuse to add it to $S$ directly. Otherwise, we use $1/q$ as the selection probability $t$, where $q$ is the number of tuples in $S$ with sharing more than $\epsilon$ attributes with $t$. Such that too unpopular tuples will be selected in very low probability.

\textbf{Algorithm Description} The pseudo code of the algorithm is shown in Algorithm \ref{algor: FRRSC}.

\begin{algorithm}
	\caption{FRRSC}
	\hspace*{0.02in}{\bf Input:}
	$ \boldsymbol{N}$: the data set, $\boldsymbol{m}$: sample size$\boldsymbol{B'} $ \\
	\hspace*{0.02in}{\bf Output:}
	\begin{algorithmic}
		\State The sample $S$
        \State $k \leftarrow 0$
        \State $i \leftarrow 0$
        \While($k < m$ and $i<|N|$)
            \If{$N_i$ is complete}
                \State $S_k \leftarrow N_i$
                \State $k\leftarrow k+1$
            \EndIf
            \State $i\leftarrow i+1$
        \EndWhile
        \While{$i<|N|$}
            \If{$ N[t] $ is complete}
				\For{$ j=0 $ to $m-1$}
					\If{cmp$(N_i, S_j) \geq \epsilon$}
					   \State $ q\leftarrow q+1 $;$ k \leftarrow $rand$ [1,q] $;
					\EndIf
					\If{$ k \leq m $}
					   \State $ S_j \leftarrow N_i $
                        \State break
					\EndIf
				\EndFor
			\EndIf
            \State $i \leftarrow i+1$
        \EndWhile
        \State return $S$
	\end{algorithmic}
	\hspace*{0.02in}{\bf Note:}
	cmp$ (T[1,i],t) $ shows the number of the same attributes shared by two tuples.
	\label{algor: FRRSC}
\end{algorithm}

We firstly initialized $S$ the first $m$ complete tuples (Line 1-?). For each tuple $N_i$, if it is complete and it shares more than $\epsilon$ attributes with some tuple in $S$ (in Line ?-Line ?), it replaces some tuple in $S$ randomly (Line ?).
%
%

\textbf{Example 2:} We attempt to sample 7 popular items from the data set shown in Example 1. We first pick $t_{1-7}$ to $S$. Then, for $t_8$, we compare it with the samples in $S$. If we set $\epsilon$ as $2$, then we can find that ${\rm cmp}(S_i ,t)\geq \epsilon$ because:
\begin{displaymath}
	t_8[{\rm CC}]=t_3[{\rm CC}]; t_8[{\rm PN}]=t_3[{\rm PN}].
\end{displaymath}

\noindent Hence we generate a random number from $1$ to $8$. If we generate $2$, then $S_2$=$t_8$ rather than $t_2$.

After that, for $t_9$, we find that the item is incomplete and see it as a misleading tuple. Thus, we check $t_{10}$ without changing $q$. For $t_{10}$, we can find that it is also a misleading tuple since it is incomplete.

We check $t_{11}$, and find it is complete. However, when we compare it with samples pointed by $S$, we find that no sample can have more than two same attributes, which shows that it has the second feature of misleading tuples. Hence, it is also a misleading tuple. Since there is no more tuples for us to select, we obtain the samples: $t_1,t_8,t_3,t_4,t_5,$  $t_6,t_7$.

\textbf{Effectiveness Analysis} Theorem 1 shows the effectiveness of proposed algorithm.
\newtheorem{theorem}{Theorem}
\begin{theorem}
	The FRRSC can keep the probability of sampling for all popular tuples the same and avoid obtaining misleading tuples.
\end{theorem}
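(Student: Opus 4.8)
The statement bundles two claims, so the plan is to prove them one at a time. For the uniformity claim I would argue that, once we restrict attention to the sub-stream of \emph{popular} tuples (complete tuples that are not misleading), FRRSC is exactly reservoir sampling with reservoir size $m$, the running counter $q$ --- the number of popular tuples seen so far --- playing the role of the stream index. The proof is then the textbook reservoir-sampling induction on $q$. The base case, the $m$ popular tuples of the initial fill, is immediate: each lies in $S$ with probability $1=m/m$. For the inductive step I would assume that after $q$ popular tuples each of them is in $S$ with probability $m/q$; when the $(q{+}1)$-st popular tuple $t$ arrives the counter becomes $q{+}1$ and, by the selection rule, $t$ is retained with probability $m/(q{+}1)$, overwriting a uniformly random slot. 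A one-line computation preserves the invariant: the new tuple stays with probability $m/(q{+}1)$, and an old tuple stays with probability $(m/q)\bigl[(1-\tfrac{m}{q+1})+\tfrac{m}{q+1}(1-\tfrac1m)\bigr]=(m/q)\cdot\tfrac{q}{q+1}=m/(q{+}1)$. Writing $Q$ for the total number of popular tuples, every popular tuple therefore ends up in $S$ with the common probability $m/Q$.

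For the second claim I would trace the control flow of Algorithm~\ref{algor: FRRSC} against the two kinds of misleading tuple from Section~\ref{sec:criteria}. If $t$ is incomplete, the guard ``$N_i$ is complete'' fails and $t$ is never copied into $S$, so its sampling probability is $0$. If $t$ is complete but shares fewer than $\epsilon$ attributes with every tuple currently in $S$, then $\mathrm{cmp}(N_i,S_j)<\epsilon$ for all $j$, so the branch that increments $q$ and performs a replacement is never entered, and again $t$ is never sampled. Since $S$ always holds only popular tuples (see the warm-start remark below), ``fewer than $\epsilon$ shared attributes with every tuple of $S$'' coincides with the definition ``fewer than $\epsilon$ shared attributes with popular tuples,'' so exactly the misleading tuples are filtered out, while Part~1 shows the survivors are sampled uniformly.

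The hard part will be making ``popular tuple'' precise, since it is defined relative to the \emph{current} contents of $S$ rather than as a fixed property, so the sub-stream on which the reservoir induction runs is not obviously well defined. I would handle this by making explicit the assumption the algorithm already relies on --- that the first $m$ tuples are themselves complete and mutually representative --- so that $S$ invariantly contains only popular tuples and the similarity test faithfully detects popularity; the induction then goes through unchanged. A minor wrinkle is that the pseudocode increments $q$ once per matching slot rather than once per matching tuple; following the algorithm overview I would read $q$ as the count of popular tuples and observe that the \textbf{break} ensures at most one slot is rewritten per incoming tuple, so the reservoir dynamics are as intended.
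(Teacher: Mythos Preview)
Your proposal is correct and follows essentially the same approach as the paper: a reservoir-sampling induction on the counter $q$ over the sub-stream of popular tuples, with the same base case ($m/m$) and the same inductive survival computation yielding $m/(q{+}1)$. Your treatment is in fact more careful than the paper's, which runs the identical induction but does not isolate the second claim, does not address the circularity in defining ``popular'' relative to the current $S$, and does not comment on the per-slot versus per-tuple increment of $q$.
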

\begin{proof}
	We prove Theorem 1 by Mathematical Induction.
\end{proof}

\noindent\textbf{(1) Initialization}

When we just put the first $m$ tuples in $S$, the probability for each of the $m$ tuples appearing in $S$ is 1. When the $(m+1)^{\rm th}$ tuple is checked, if it is a popular tuple, we generate a random number in $[1,q+1]$, where $q$ is equal to $m$. If the number is in $[1,m]$, the $(m+1)^{\rm th}$ tuple can be added to $S$.

The probability of adding the $(m+1)^{th}$ tuple is $p_1=m/(m+1)$. For each sample $t$ in $S$, the probability that $s$ is replaced by $m+1$ is $p_2=p_1\times (1/m)=1/(m+1)$. Therefore, the probability that $t$ is not replaced by the $(m+1)^{th}$ tuple is $p_3=1-p_2=m/(m+1)$. Since $p_4=1$, $p_3\times p_4=m/(m+1)$.

\noindent\textbf{(2) Induction Assumption}

We suppose that when the $t$th popular and complete tuple is selected by probability $m/q$, the previous popular and complete $t-1$ tuples are sampled with the same probability $m/q$.

\noindent\textbf{(3) Induction}

The condition of the sample of the $(t+1)^{\rm th}$ tuple: It can be computed in two steps: (a) Before the $(t+1)^{\rm th}$ sample, the popular tuple is selected by the probability as $p_5=m/q$ according to the suppose. (b) If the $(t+1)^{\rm th}$ tuple does not replace the current samples in $T[1]$, we can know the probability is $p_6=q/(q+1)$. Because the probability of item being replaced is $p_7=1/(q+1)$, $p_6=1-p_7$. Since $p_6\times p_5=(m/q)\times (q/(q+1))=m/(q+1)$, this shows that suppose is right.

\noindent\textbf{(4) Conclusion}

According to initialization and induction, we know it is random to the popular tuples. And the misleading (unpopular/special) tuples cannot be extracted.
Time complexity Analysis To the first time of extraction, we can know if we have a data set with $N$ tuples and each samples in $T[1]$ except the front m tuples. When we find the same attributes between the two tuples are more than $b'$, we do not need to compare anymore. The average times of comparing maybe $(b'+r)/2$. Therefore, the time complexity is $O(n)$ (the times of comparing is $(b'+r)/2\times m\times N$).

\begin{algorithm}
	\caption{TRRSC}
	\hspace*{0.02in}{\bf Input:}
$ \boldsymbol{N} $ is big data set. $ \boldsymbol{m} $ samples in each group. $ \boldsymbol{b} $ and $ \boldsymbol{b'} $ set by us due to the data type and demand of user as standards of similarity. The samples $ \boldsymbol{T[1]} $ to $ \boldsymbol{T[i-1]} $ and each set is with $ m $ indexes from $ \boldsymbol{T[a,1]} $ to $ \boldsymbol{T[a,m]} (1\leq a\leq i-1)$.\\
	\hspace*{0.02in}{\bf Output:}
	The group of indexes from $ \boldsymbol{T[1]} $ to $ \boldsymbol{T[n]} $.
	\begin{algorithmic}[1]
		\State $ p=i\times m $;
		\State number $i\times m+1$ to $ N $ tuples from 1 to $N-i\times m$;
		\State $ t=1 $;$ q=1 $;
		\If{there is $ (t+1)^{\rm th} $ item exists in $ N $}
			\State $ t=t+1 $;
			\If{$ N[t] $ is complete}
				\For{$ i=1 $ to $ \min(q,m) $}
					\If{cmp$(T[1,i],t) \geq b$}
						\For{$ j=1 $ to $ i-1 $}
							\For{$ k=1 $ to $ m $}
								\If{(cmp$ (T[j,k],t) \geq b' $) \textbf{and} (cmp$ (T[j,k],t)\leq b $ (\textbf{for} all $ 1\leq  j \leq  i-1,$ $ 1\leq k\leq  m $))}
								\State $ q=q+1 $;$ k= $rand$ [1,q] $;
									\If{$ k\leq m $}\ $ T[i,k] $ point to $ N[t] $;
									\State go to 4;
									\EndIf
								\EndIf
							\EndFor
						\EndFor
						\Else
						\State go to 4;
					\EndIf
				\EndFor
			\EndIf
		\Else
			\If{q$ \geq $m}\ We start the next iteration;
			\Else
				\State $ n=i-1 $;
				\State output $ \boldsymbol{T[1]} $ to $ \boldsymbol{T[n]} $ as sampling result.
			\EndIf
		\EndIf
	\end{algorithmic}

	\hspace*{0.02in}{\bf Note:}
	cmp$ (T[1,i],t) $ shows the number of the same attributes when $ i=1 $; rand$ [1,q] $ is 1 when $ q=1 $ which guarantee the $1^{\rm th} $ tuple can be added as what we want.
	\label{algor: TRRSC}
\end{algorithm}

\subsubsection{TRRSC: Extraction of $ 2^{\rm nd}-n^{\rm th} $ Groups of Items}
\label{section: trrsc}

\textbf{Algorithm Overview} By calculating the number of the same attributes of the tuple with samples in $T[1],T[2],\cdots,$ $T[i-1]$, we ensure the samples are popular (there is a sample with no less than $b'$ same attributes) but different from the samples obtained in previous iterations (there is no sample with more than $b$ same attributes) and establish a new sample set for it.

\textbf{Algorithm Description} The pseudo code of the algorithm is shown in Algorithm \ref{algor: TRRSC}. Such function is invoked for $n-1$ times to generate $T[2]$ to $T[n]$.

When we choose the $i^{\rm th} (i\leq n)$ group of samples, we set $i\times m+1$ as the starting number firstly (Line 1-2) .The reason is that since we choose at least $m$ items each time, there are no popular items from $1$ to $i\times m$ in the $i^{\rm th}$ sampling.

We then obtain samples from $i\times m+1$ to $N$. We number the $i\times m+1$ to $N$ tuples as items from $1$ to $N-i\times m$. We also set the two variables $t$ and $q$ to show the number of tuples we are dealing with and the number of new kind of found popular tuples, respectively (Line 3). To generate $ T[i,1] $, we check tuples from the first one to validate whether they can meet our new standard.

New standard is to compare the tth tuple with the samples in $ T[1] $ to $ T[i-1] $ (Line 9). If a sample has more than $ b' $ same attributes with $ t $ and no sample has more than $ b $ same attributes with $ t $, we set $ k $ as $ 1 $ and add this tuple as the first one (Line 11). We check it to prevent samples from being too similar to make the CFDs strict. $ b $ is a high limit ensuring that the chosen tuple is not similar to those samples we choose and $ b' $ is a lower bound to ensure that chosen samples are not too special to make CFD useless.

Then, we continue to add new tuples. However, for each time, except the comparison with samples in $ T[1] $ to $ T[i-1] $, we compare each attribute in the $ t^{\rm th} $ tuple with each sample in $ T[i] $ (Line 7). If at least a sample in $ T[i] $ has more than $ b $ attributes same with attributes of the $ t^{\rm th} $ (Line 8), we compare it with samples in $ T[1] $ to $ T[i-1] $.

After that, we increase $ q $ by $ 1 $ and generate $ k $ in $ [1, q] $ (Line 12). We compare $ k $ with $ m $ to decide whether to replace sample in $ T[i] $ in the same way as FRRSC (Line 13). And if there is no sample in $ T[i] $ has at least $ b $ attributes same as $ t $ or some attribute of $ t $ is blank, or there is no sample in $ T[1] $ to $ T[i-1] $ following our new standard, we see it as a new tuple (Line 14).

Finally, when no new tuple is left, if we find the number of popular tuples similar to samples in $ T[i] $ represented by $ q>m $, we know that there are still new tuples not found. And we perform the $ (i+1)^{th} $ time of iteration to find new kind of tuple (Line 18). However, when we find $ q\leq m $, we can know that almost all kinds of popular tuples have been found. At this time, we can set n as $ i-1 $ since we only need $ i-1 $ times of sampling to find representative CFD sets. The $ i^{th} $ time of sampling is canceled (Line 20-21).

We use an example to demonstrate process of algorithm.

\textbf{Example 3:} If we have found a sampling set of $ T[1]={t_1,t_2,t_3,t_4} $ and want to find the second sampling set, we start from the $ (4\times 1+1)^{\rm th}=5^{th} $. We compare $ t_5 $ with samples in $ T[1] $.

If we set $ b' $ and $ b $ as $ 2 $ and $ 3 $ respectively, we can find that
\begin{displaymath}
	t_5[{\rm STR}]=t_3[{\rm STR}];t_5[{\rm ZIP}]=t_3[{\rm ZIP}].
\end{displaymath}

Since no samples in $ T[1] $ share 3 attributes with $ t_5 $, we add $ t_5 $ to $ T[2] $ as the first sample.

We can find that t6 has more than 3 attributes same as $ t_5 $. Then we compare $ t_6 $ with samples in $ T[1] $ and find that $ t_3 $ share 2 attributes with $ t_6 $ but no sample shares 3 attributes with it. Thus, we add $ t_6 $ to $ T[2] $. Then, we can find 3 attributes in $ t_7 $ the same as those in $ t_6 $. Meanwhile, $ t_1 $ in $ T[1] $ has two attributes same as $ t_7 $ and no item in $ T[1] $ has 3 attributes same as $ t_7 $. Then, we add $ t_7 $ to $ T[2] $. Since we find that no item in $ T[2] $ has 3 attributes same as $ t_8 $, we give it up and turn to $ t_9 $. Then, we find $ t_9 $ and $ t_{10} $ are incomplete and $ t_{11} $ is special.

Therefore, we can get $ T[2]={t_5,t_6,t_7} $ which is too small. So we quit $ T[2] $ and let $ n=1 $ with $ T[1] $ as sampling result.

\textbf{Effectiveness Analysis} Theorem 2 shows the effectiveness of proposed algorithm.
\begin{theorem}
	For popular items similar to the sampling set $ T[i] $, we ensure their probability be sampled the same in the ith sampling and avoid sampling misleading items in TRRSC.
\end{theorem}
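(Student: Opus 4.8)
The plan is to run the same mathematical-induction argument that established Theorem~1, but now restricted to the equivalence class of tuples that the guards of Algorithm~\ref{algor: TRRSC} actually feed into the reservoir step. Call a streamed tuple $t$ a \emph{candidate (for $T[i]$)} if it is complete, it shares at least $b$ attributes with some sample currently held in $T[i]$ (the test on Line~8 that decides whether $t$ is examined at all), it shares at least $b'$ attributes with some sample of the earlier groups $T[1],\dots,T[i-1]$, and it shares no more than $b$ attributes with every sample of $T[1],\dots,T[i-1]$ (the compound test on Line~11). I would first observe that the second half of the claim is immediate from inspecting the guards: an incomplete tuple is killed on Line~6, an unpopular tuple (fewer than $b$ shared attributes with every current sample of $T[i]$, or fewer than $b'$ with every earlier sample) never reaches the counter update on Line~12, and a tuple too close to an earlier group (more than $b$ shared attributes with some sample of $T[1],\dots,T[i-1]$) is likewise rejected; hence no misleading item is ever placed in $T[i]$.

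For the first half I would induct on the number $t$ of candidates seen so far in the $i$-th pass, with the invariant: after the $t$-th candidate has been processed, every candidate encountered so far is currently stored in $T[i]$ with a single common probability $m/q$, where $q$ is the running candidate counter. For the base case, the first candidate forces $q=1$ and $\mathrm{rand}[1,1]=1\le m$, so it is stored with probability $1=m/q$. For the inductive step, assume the invariant holds after $t$ candidates. When the $(t+1)$-st candidate arrives, $q$ is incremented and $k$ is drawn uniformly from $[1,q+1]$; the new candidate is retained iff $k\le m$, i.e.\ with probability $m/(q+1)$, and in that case it overwrites one of the $m$ slots uniformly. An already-stored candidate therefore survives this round with probability $1-(m/(q+1))\cdot(1/m)=q/(q+1)$, so by the induction hypothesis its overall retention probability becomes $(m/q)\cdot(q/(q+1))=m/(q+1)$, which re-establishes the invariant. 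Taking $t$ to be the total number of candidates in the stream yields that all popular items similar to $T[i]$ end up in $T[i]$ with one and the same probability, as claimed.

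The main obstacle, as in the garbled pseudocode, is bookkeeping rather than probability: I must check that the counter $q$ is advanced exactly once per candidate and never for a non-candidate, and that ``candidacy'' is genuinely a property of the tuple alone, independent of the order of arrival, so that the symmetry the induction relies on is not broken. The point to pin down is that the comparisons against $T[1],\dots,T[i-1]$ are made against an already-frozen collection, so the condition ``$\ge b'$ with some earlier sample and $\le b$ with all earlier samples'' does not depend on the current contents of $T[i]$; the only history-dependent test (Line~8, ``$\ge b$ with some sample currently in $T[i]$'') merely controls \emph{which} stream positions are inspected, exactly as the corresponding test does in FRRSC, and does not perturb the uniformity among the candidates that do get inspected. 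Once this is verified, the remaining computation is the reservoir calculation above, identical in form to the proof of Theorem~1.
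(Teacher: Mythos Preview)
Your proposal is correct and follows essentially the same approach as the paper's own proof: both reduce Theorem~2 to the induction already carried out for Theorem~1 by observing that the extra comparison with the frozen earlier groups $T[1],\dots,T[i-1]$ only adds a filtering guard and leaves the reservoir replacement mechanism (increment $q$, draw $k\in[1,q]$, keep if $k\le m$) untouched. The paper states this in a single short paragraph (``We can utilize proof in FRRSC\ldots The rule of random number generation keeps the same''), whereas you spell out the candidate set, the base case, and the inductive step explicitly, and you flag the order-dependence of the Line~8 test as a bookkeeping point to verify---detail the paper omits, but the underlying argument is the same.
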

\begin{proof}
	We can utilize proof in FRRSC. Each iteration is similar to the $ 1^{\rm st} $ one. In the process of $ 2^{\rm nd}¨Cn^{\rm th} $ times of sampling, we only add a new comparison with items we have sampled. The rule of random number generation keeps the same. We can see $ 2^{\rm nd}¨Cn^{\rm th} $ times of sampling composed of $ n-1 $ times of $ 1^{\rm st} $ extracting. Since each time is random sampling, we can know it is random for the whole process. And the probability is zero to special and incomplete items.
\end{proof}

\textbf{Time complexity Analysis} To the process of $ 2^{\rm nd}-n^{\rm th} $  times of sampling, we can know that: For the ith sample, we need to compare each item with items in $ T[1],T[2],\cdots,T[i-1] $. Therefore, we need to compare for $ (i-1)\times m $ times. Total times are $ (i-1)\times m\times N\times r $ for the ith extraction. Therefore, total times of comparing is:. To the process of I/O, the data is scanned for one time making us see time complexity as time s of comparison. Thus, the time complexity is $ O(n) $.

\subsection{One-Pass Sampling Algorithm}
\label{section: one-pass}

\textbf{Algorithm Overview} For a big data set, we should compass all iterations in one scan to save time. Initially, we make the $ m $ indexes in $ T[1] $ point to the first $ m $ tuples and establish an array $ q $. Each element $ q[i] $ is the number of the tuples similar to $ T[i] $.

Then, we compare each new scanned tuple with samples in our sampling sets. If a sample in $ T[i] $ has more than $ b $ same attributes with it, we add $ q[i] $ by $ 1 $ and add it into $ T[i] $ if $ q[i]<m $. When $ q[i]\geq m $, we generate a random number $ k $ in $ (1,q[i]) $. If $ k $ is no larger than $ m $, we add it as the $ k^{\rm th} $ sample in $ T[i] $. Otherwise, we abandon it.

If no sample has more than $ b $ same attributes with the tuple, we check whether a sample has no less than $ b' $ same attributes with it since it may be special. If there is such a sample, we know that it is not special and put it into $ T[i+1] $. Otherwise, it will be abandoned.

When sampling from real big data, we observed that the possibility of popular tuple being sampled is too small. If we firstly generate a random number $ k $ and compare attributes only when $ k $ is no larger than $ n\times m $, we will save many comparing times. As the cost, we will lose some tuples when counting items similar to $ T[a] $. It is because that even though a new tuple is similar to $ T[a] $, we do not know whether it is similar or not without comparing it with tuples in sample sets when $ k>m $. This will make us delete the $ T[a] $ wrongly since the amount of its similar tuples is smaller than $ m $. For big data, $ T[a] $ always has more than $ m $ similar tuples. Therefore, after all reservoirs are full $ (\min(q[a])\geq m\ (0<a\leq n)) $ ,we can generate a random number before comparing new tuple with other samples.

\textbf{Algorithm Description} Pseudo code is shown in Algorithm 3. We first set $ m $ pointers in table $ T[1] $ pointing to items from $ 1 $ to $ m $, initialize a variable $ t $ and an array $ q[n] $ (Line 2-4). $ q[i] $ is number of tuples similar to those samples in $ T[i]$. $t$ is increased by $ 1 $ and when there is a reservoir not full ($\min(q[a])<m\ (0<a\leq n)$)  we compare each attribute in $ N[t] $ with samples in $ T[1]\cdots T[i] $ (Line 9-10).

If there is at least one sample in $ T[a] $ having more than $ b $ attributes with the same amount as the attributes of $ N[t] $ (Line 11),we increase $ q[a] $ by 1 and generate a random integer $ k $ in $ [1,q[a]] $ when $ q[a]\geq m $ (Line 12-13). When $ k\leq m $, we replace sample $ T[a,k] $ with $ N[t] $ (Line 14-15). When $ k>m $, we find a new tuple. When $ q[a]<m $, we add $ t $ as $ T[a,q[a]+1] $ directly (Line 17).

When we compare $ N[t] $ with samples in $ T[1] $, $ T[2] \cdots T[i]$, we also check whether there is an item having more than $ b' $ attributes same with $ N[t] $ and set label as 1 to show there is such an item. If there is no item having more than b same attributes with $ N[t] $, we check whether the label is 1. If label is 1 showing that some sample having more than $ b' $ same attributes with $ t $, we build a new group $ T[i+1] $ and set it as $ T[i+1,1] $ (Line 21-22).

When all reservoirs are full ($\min(q[a])\geq m\ (0<a\leq n) $), we generate a random integer $ k $ in $ [1,t] $, and compare each attribute in $ N[t] $ with samples in $ T[1],T[2],\cdots,T[i] $ (Line 29-30) only when $ k\leq n\times m $ (Line 27). We use $ n\times m $ rather than $ m $ as the high limit because there are $ n $ sample sets. Then, if there is at least one sample in $ T[a] $ having more than $ b $ attributes with the same amount as the attributes of $ N[t] $, we increase $ q[a] $ by 1 and replace the sample $ T[a, k\%m] $ with $ N[t] $ (Line 33).  When comparing, we also let label equal to 1 to show there is such an item having more than b' attributes same as $ N[t] $. We build a new group $ T[i+1] $ and set it as the $ T[i+1,1] $ (Line 37).Therefore, we synthesize the two phases in FRRSC and TRRSC in once scan. Finally, the results are $ T[1],T[2],\cdots,T[n] $ (Line 41).

\textbf{Example 4:} We show an example to sample from small data set, in which condition, $ \min_{0<a\leq n} (q[a]) $ is always smaller than $ m $. As to big data, it is hard to show due to the limit of space. To sample from big data, when $ \min_{0<a\leq n}(q[a])<m $, the sampling process keeps the same. When $ \min_{0<a\leq n}(q[a])$  $\geq m $, the sampling process is very similar. Hence we do not show how to sample from big data.

We compare t6 with $ T[1] $ and find that no sample in $ T[1] $ has more than 3 attributes same with it. However, when comparing it with $ T[2] $, we find that it has 5 attributes same with $ T[2,1] $ which is $ t_5 $ actually. Then since $ q[2]=1<3 $, we insert the $ t_6 $ directly to $ T[2,2] $.

When it comes to $ t_7 $, we compare it with $ T[1] $ and the result is the same as $ t_5 $ and $ t_6 $. But when we compare it with $ T[2] $, we find it has 3 attributes same as $ t_6 $. Meanwhile, since $ q[2]=2<3 $, we add the $ t_7 $ to $ T[2,3] $ directly.

As to $ t_8 $, we find that no sample in $ T[1] $ and $ T[2] $ has more than 3 attributes same with its. However, it has 2 attributes same with $ t3 $ tuple. Therefore, we add it to $ T[3,1] $.

To $ t_9 $ and $ t_{S} $, we can find both of the two items are incomplete, which makes we abandon them directly. After that, we find that no item in $ T[1] $, $ T[2] $ and $ T[3] $ has more than 2 attributes same with $ t_{11} $'s attributes. Finally, we check $ T[1] $, $ T[2] $ and $ T[3] $, and we find that
S\begin{displaymath}
	q[1]=4\geq 3;q[2]=3\geq 3;q[3]=1<3.
\end{displaymath}

\noindent Hence we abandon $ T[3] $ and leave $ T[1]=\{t_2,t_3,t_4\},T[2]=\{t_5,t_6,t_7\} $ as sampling results. $ n=2 $ is the number of groups.

\textbf{Effectiveness Analysis} Theorem 3 shows the effectiveness of the proposed algorithm.

\begin{theorem}
	For the popular complete tuples in a big data set which are all similar to the same $ T[i] $ sampling set, the probability of extraction keeps the same in BRRSC. And the misleading tuples cannot be extracted in BRRSC.
\end{theorem}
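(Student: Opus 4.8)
The plan is to prove Theorem~3 by reducing it to Theorems~1 and~2: I would show that BRRSC (Algorithm~3) is a faithful one-pass simulation of the multi-pass procedure assembled from one run of FRRSC (producing $T[1]$) and $n-1$ runs of TRRSC (producing $T[2],\dots,T[n]$). The statement about misleading tuples is immediate and I would dispatch it first: an incomplete tuple is discarded on sight in every branch of Algorithm~3, exactly as in FRRSC and TRRSC, and a tuple sharing fewer than $b'$ attributes with every current sample triggers neither the ``more than $b$'' branch nor the ``no less than $b'$'' branch, so it is never written into any $T[a]$ nor used to seed a new group $T[i+1]$; hence its extraction probability is $0$. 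The real content is the uniform-probability claim, which I would prove by induction on the tuples in scan order, carrying the invariant that after processing the first $t$ tuples, every popular complete tuple among them that is similar to a fixed group $T[a]$ currently lies in $T[a]$ with one and the same probability.

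For the base case and the early inductive steps I would invoke Theorems~1 and~2 directly. As long as $\min_{0<a\le n} q[a] < m$, the per-tuple logic of Algorithm~3 is verbatim that of FRRSC on the first group and of TRRSC on the later groups: the first $m$ tuples seed $T[1]$; a tuple similar to $T[a]$ with $q[a]<m$ is appended as $T[a,q[a]+1]$; a tuple similar to $T[a]$ with $q[a]\ge m$ replaces $T[a,k]$ for $k$ uniform in $[1,q[a]]$ and is kept only if $k\le m$; and a popular tuple dissimilar to all existing groups opens $T[i+1]$. Hence the reservoir-sampling invariant established in the induction step of the proof of Theorem~1, combined with the ``each iteration mirrors the first'' argument of Theorem~2, transfers without change, and every popular tuple similar to $T[a]$ sits in $T[a]$ with probability $m/q[a]$.

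The delicate part, which I expect to be the \emph{main obstacle}, is the regime $\min_{0<a\le n} q[a]\ge m$, where Algorithm~3 first draws $k$ uniformly in $[1,t]$ and performs the attribute comparison and the possible write to $T[a,k\bmod m]$ only when $k\le n\times m$. I would argue that this ``draw-then-compare'' order induces the same conditional distribution on the contents of every reservoir as the ``compare-then-draw'' order analysed in Theorems~1 and~2. The points that need to be pinned down are: (i) conditioned on $k\le n\times m$, the residue $k\bmod m$ is uniform over the $m$ slots, and the acceptance probability $n\times m/t$, distributed across the $m$ slots and the $n$ full reservoirs, matches---up to the bookkeeping of $q[a]$---the per-slot replacement probability that the multi-pass scheme assigns, so the marginal chance that a newly seen popular tuple similar to $T[a]$ displaces a given stored sample is unchanged; and (ii) although skipping the comparison when $k> n\times m$ makes $q[a]$ undercount the true number of tuples similar to $T[a]$, the big-data hypothesis guarantees that each $T[a]$ has far more than $m$ similar tuples, so no reservoir is ever emptied or deleted on this account and the undercount touches only the counter, not the set of retained tuples. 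Combining (i), (ii), the symmetry among the slots, and the induction hypothesis then closes the induction and gives equal extraction probability to all popular complete tuples similar to the same $T[i]$, which together with the first paragraph completes the proof.
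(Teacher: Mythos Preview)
Your approach is essentially the same as the paper's: both reduce Theorem~3 to Theorems~1 and~2, split the analysis into the two regimes $\min_a q[a]<m$ and $\min_a q[a]\ge m$, argue that the first regime is verbatim FRRSC/TRRSC, and then claim that in the second regime the reordering (draw before compare) does not alter the sampling distribution. Your treatment of the second regime via points~(i) and~(ii) is considerably more explicit than the paper, which simply asserts ``we just change the order of sampling processes which will not influence the probability''; your handling of misleading tuples is likewise spelled out where the paper leaves it implicit.
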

\begin{proof}
	We firstly put the first $ m $ tuples in the $ T[1] $. Then if $ \min(q[a])<m\ (0<a\leq n) $, for each time we add the tuple to $ T[i] $ or establish a new sample set $ T[n+1] $. When $ q[i]\geq m $, we generate a random number in $ [1,q[i]] $. Therefore, for $ T[i]\ (1\leq i\leq n) $, the condition of sampling similar samples is similar to the first extraction for $ T[1] $. Only the condition of generating a random and add $ q $ with 1 is different, and this does not influence the calculation of probability and the result of equal probability. Meanwhile, if $ \min(q[a])\geq m\ (0<a\leq n) $, we just change the order of sampling processes which will not influence the probability.
\end{proof}

\textbf{Time complexity Analysis} Different from the $ 2^{\rm nd}-n^{\rm th} $ extraction, we do not have to compare with all the sampled items to ensure the item is new. We can add it to its similar $ T[i] $ directly. When $ \min(q[a])<m $, since the average times comparing with sampling items is $ (n/2) $. Hence the complexity of time $ f_1(|N|) $ is $f_1(|N|) =r\times m\times (n/2) \times |Nf|=O(c) $.

$ N_f $ is a small part of $ N $, which can make each sample set $ T[i] $ have more than $ m $ items. When $ \min(q[a])\geq m $, we firstly generate $ k $ in $ [1,t] $ before we compare item's attributes. Probably we can compare attributes is $ p_1=(m\times n)/t $ Therefore, for $ N_b $ which shows a large part of $ N $ except $ N_f $,  the time complexity $ f_2(|N|) $ is

\begin{align*}
	f_2(N)=&(\frac{1}{\left|N_f\right|}+\frac{1}{\left|N_f\right|+1}+\frac{1}{\left|N_f\right|+2}+\frac{1}{\left|N_f\right|+3}+\cdots\\
	&+\frac{1}{\left|N\right|})\times m\times n\times r\times(n/2)\\
	=&(\ln(\left|N\right|)-a)\times m\times n\times r\times(n/2)\\
	=&O(\ln(\left|N\right|)).
\end{align*}

\noindent The complexity of time $ f(\left|N\right|) $ for the whole process is

\begin{align*}
	f(\left|N\right|)&=f_1(\left|N\right|)+f_2(\left|N\right|)\\
	&=(\ln(\left|N\right|)-a+\left|N_f\right|)\times m\times n\times r\times(n/2)\\
	&=O(c)+O(\ln(\left|N\right|))\\
	&=O(\ln(\left|N\right|).
\end{align*}

This shows that the complexity of sampling is $ O(\ln(|N|)) $ which is sub-linear to the data set.

\begin{algorithm}[!h]
	\caption{BRRSC}
	\hspace*{0.02in}{\bf Input:}
	Data set $ \boldsymbol{N} $. $ \boldsymbol{m} $ samples in each group. $ \boldsymbol{b} $ and $ \boldsymbol{b'} $ set by us due to data type and demand of user as standards of similarity.\\
	\hspace*{0.02in}{\bf Output:}
	The groups of indexes $ \boldsymbol{T[1]} $ to $ \boldsymbol{T[n]} $.
	\begin{algorithmic}[1]
		\For{$ w=1 $ to $ m $}
			\State $ T[1,w] $ point to $ N[w];$
		\EndFor
		\State $ t=m; $ label$ =0 $;
		\State $ q[1]=m $;
		\If{there is $ N[t+1] $}
			\State $ t=t+1; $
			\If{$ \min_{0<a\leq n}(q[a])<m $}
				\If{$ N[t] $ is complete}
					\For{$ w=1 $ to $ i $}
						\For{$ j=1 $ to $ \min(m,q[w]) $}
							\If{cmp$ (T[w,j],t)\geq b $}
								\If{$ q[b]>m $}
									\State $ q[b]=q[b]+1;k= $rand$ [1,q[b]]; $
									\If{$ k\leq m $}
										\State $ T[b,k] $ point to $ N[t]; $go to 5;
									\EndIf
								\Else
									\State $ q[b]=q[b]+1;T[b,q[b]] $ point to $ N[t] $;
									\State go to 5;
								\EndIf
							\ElsIf{cmp$ (T[b,j],t)\geq b' $}
								\State label=1;
							\EndIf
						\EndFor
					\EndFor
					\If{label==1}
						\State $ T[i,1] $ point to $ N[t]; $go to 5;
					\EndIf
				\Else
					\State go to 5;
				\EndIf
			\ElsIf{$ \min_{0<a\leq n}(q[a])\geq m $}
				\State $ k= $rand$ [1,t]; $
				\If{$ k\leq m\times n $}
					\If{$ N[t] $ is complete}
						\For{$ w=1 $ to $ i $}
							\For{$ j=1 $ to $ \min(m,q[w]) $}
								\If{cmp$ (T[w,j],t)\geq b $}
									\State $ q[b]=q[b]+1; $
									\State $ T[b,k\%m] $ point to $ N[t] $; go to 5;
								\ElsIf{cmp$ (T[b,j],t)\geq b'$}
									\State label=1;
								\EndIf
							\EndFor
						\EndFor
						\If{label==1}
							\State $ T[i,1] $ point to $ N[t]; $go to 5;
						\EndIf
					\Else
						\State go to 5;
					\EndIf
				\EndIf
			\Else
				\State output $ T[1],T[2],\cdots,T[n]; $
			\EndIf
		\EndIf
	\end{algorithmic}
	\label{algor: BRRSC}
\end{algorithm}

\section{BDC: CFD Discovery For Big Data}
\label{section: bdc}

After sampling, we need to find rules on n small data sets. For the discovery, we still have following problems to solve.

\begin{enumerate}
	\item[(1)] Although we use the RRSC, there may also be some special or dirty samples. The CFD discovery algorithm should be fault-tolerant.
	
	\item[(2)] Since there are variable kinds of big data, we need to make our method fit different conditions. Meanwhile, we need to ensure the CFD set is complete. Therefore, our method should discover both constant and variable CFDs and tolerate faults. Such algorithm is in Section \ref{section: dfcfd}.
	
	\item[(3)] Due to errors in the training set, it is possible to find conflicts in CFDs produced by an algorithm. To resolve the conflicts, we establish a graph-based method to find correct CFDs by finding disconnected subset with largest weights in Section \ref{section: deal with confict}.
\end{enumerate}

\subsection{DFCFD: Dynamical Fault-tolerant CFD Discovery Algorithm}
\label{section: dfcfd}

\textbf{Algorithm Overview} DFCFD is designed to find CFDs from the results of sampling. We improve three CFD discovery algorithms CTANE, FastCFD, CFDMiner [1] to BCTANE, BFCFD and BCFDM by accepting some CFDs with limited confidence to tolerant fault. We find differe-nt algorithms have their preference to variable big data, so we choose different groups of algorithms. During synthesis, we utilize the same process of different methods.

The whole work of DFCFD algorithm is shown in Figure 2. It can be found after getting samples, we preprocess samples. And we have two choices of algorithm combination which will be introduced in the following.

\textbf{Algorithm Description} We then introduce the detailed information of three improved algorithms and the synthesis of them.

\textbf{BCTANE} To improve CTANE, we use a threshold $ e $ to decide whether we can accept a CFD. For each CFD, we set a variable $ u'=|T| $ ($ T\subseteq r $ and CFD is absolutely right for items in $ T $). $ |T| $ denotes the number of the samples in $ T $ which is a set of samples. $ r $ is a sample set where we find CFDs. Then we get a new variable $ u=u'/|T'| $ ($ T'\subseteq r $ it conforms to left side (premise) of CFD). We improve CTANE by adding following two steps.

\begin{enumerate}
	\item[a.] When we cut a limb, we change the rule to that if $ u_{\rm CFD}\leq e $, then we cut the limb.
	
	\item[b.] When we calculate the supporters for a CFD, we think that items with same LHS can support CFD when RHS is empty or wrong (means similarity$ >e $).
\end{enumerate}

\begin{figure}[h]
	\centering
	\includegraphics[width=6cm]{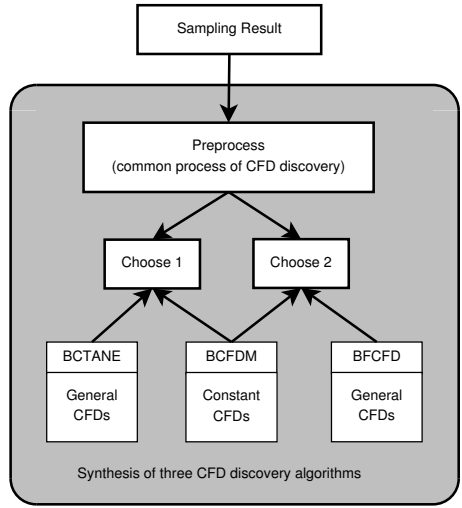}
	\caption{The overview of DFCFD algorithm}
	\label{fig: DFCFD}
\end{figure}

\textbf{BFCFD} To develop FastCFD, we change its procedure FindMin to adapt to data sets with special or dirty ones. When FindMin determines whether a constant ta makes constant CFD$ (X\rightarrow A,(t_P\parallel t_a)) $ valid, we check whether there is no $ X'\subseteq X $ in size $ |X|-1 $ making CFD$ (X'\rightarrow A,(t_P[X']\parallel t_a)) $ valid in FastCFD. However, when it comes to big data, many samples can contain errors or incomplete one. Hence we make the BFCFD allow some different items to make CFD$ (X'\rightarrow A,(t_P [X']\parallel t_a)) $ valid, when following constraint is satisfied.

\begin{align*}
	&u'=|T|(T\subseteq r; CFD(X'->A,(t_P[X']||t_a) {\rm \ is\ right}\\
	&\quad {\rm for\ items\ in}\ T)\\
	&u=u'/|T'|(T'\subseteq r {\rm \ and\ it\ conforms\ to\ } t_P[X'] )
\end{align*}

For the constant CFDs, when $ u > e $, we say that CFD $ (X'\rightarrow A,(t_P[X']\parallel t_a)) $ is valid and acceptable.

Then, in FindMin, to find variable CFDs from big data, we use a threshold of error $ e $ to tolerant the wrong samples, we revise the constraints as follows.

\begin{enumerate}
	\item[a'.] If number of $ X'\subseteq X $ in size $ |X|-1 $ making  $Y\cup(X\setminus X')$ cover $D_A^m(r_{t_P[X']})$  is less than $ e\times I $ .
	
	\item[b'.] If number of $ Y'\subseteq Y $ of size $ |Y|-1 $ making $ Y' $ covering  $D_A^m(r_{t_P[X]})$ is less than $ e\times i' $.
\end{enumerate}

If a' and b' are both satisfied, the variable CFD is accepted.

\textbf{BCFDM} We change the CFDMiner in the way similar to above two improvements. In CFDMiner's third step, we check the free item set $ (Y,s_p) $ in list $ L $ with following constraints (the number of attributes in $ Y $ is shown by $ i $).

\begin{enumerate}
	\item[a.] For each subset $ Y'\not\subset Y $ such that $ (Y',s_p[Y'])\not\subset L $, we replace RHS$ (Y,s_p) $ with RHS$ (Y',s_p[Y']) $. But the RHS$ (Y',s_p[Y']) $ cannot lead to a left-reduced constant CFD.
\end{enumerate}

For big data, we can ignore these wrong tuples and constraint is modified as follows.

\begin{enumerate}
	\item[a'.] If number of the subsets of $ Y'\not\subset Y $ and RHS$ (Y',s_p[Y']) $ leading to that a left-reduced constant CFD is less than $ e\times i $. Or when comparing the items similar to wrong item, if the similarity of similar items and wrong one is larger than $ e $, we gather similar with the wrong one to find there is no left-reduced constant CFD. If condition $ a' $ is met, we can accept $ (Y,s_p) $.
	
\end{enumerate}

\textbf{Integration of three algorithms} In order to synthesize these three algorithms, we should merge the same or similar processes of these three methods to accelerate the whole process by preprocessing. According to [1], these three original algorithms all need to know the supporters of different attribute sets, which is same to our improved algorithms. Therefore, we firstly generate the number of supporters for different attributes and put them in a hash table. Then, using the hash table, we can reduce repeat calculation in the process of finding CFDs by three algorithms.

To choose the algorithms, we need to consider different preference of them. Since we have not changed a lot about three algorithms, the function of the improved algorithms is similar to that of original ones. Then according to [1], we can find CTANE cannot run to completion when arity is above 17 and it can be sensitive to support threshold and outperforms FastCFD when the data set is large with small arity. However, FastCFD can outperform CTANE when arity is larger than 17 and can do well for small data set with few attributes. What is more, CFDMiner can always outperforms the other two by three orders of magnitude making us ignore its efficiency. Therefore, we choose BCTANE and BCFDM when arity is smaller than 17 and items are more than a million. When arity is larger than 17, we utilize BFCFD and BCFDM together.

\subsection{Deal with conflicts between CFDs}
\label{section: deal with confict}

With dirty data in the training set, the discovered CFDs may contain conflicts. Since we premise that the large part of data set is clean, we attempt to find a maximum compatible rule subset. Thus, we model the CFD set as a weighted undirected graph including CFDs as nodes. We add a line between two nodes when there is conflict between two CFDs. The weight of each node is the number of supporters for each node. Then the problem of finding maximum compatible rule subset is converted to finding maximal weight independent set of nodes from graph. To solve it, we develop linking rules and MWID algorithm. In this section, we first introduce how to get the weight of each node (Section \ref{section: cal weight}), then we represent the conflicts between CFDs by linking rules (Section \ref{section: disc confict}), and finally we use MWID algorithm to find a maximal weight independent set (Section \ref{section: mwid}).

\subsubsection{Calculating the weight of each node}
\label{section: cal weight}

We use the number of supporters of a CFD as weight of each node in WCFD. The WCFD is a weighted undirected graph for CFDs. For constant CFDS, such number could be computed by SQL, while it is harder for variable CFDs.

Thus, we propose a new method to calculate variable CFDs' supporters. We firstly build a rank for the number $ (r_1,r_2),(r_2,r_3),(r_3,r_4),\cdots $ for the samples with $ n $ samples in it. We should notice that the ranker has larger distance in back. And when it comes to the half of $ n $, we think the supporters as large enough to ignore the difference between them. Thus, we can set the last rank as $ (n/2,n) $.

With the rank, we can set the threshold k instead of e in finding CFDs by FastCFD or CTANE as the $ r_1,r_2,r_3,\cdots $. If a CFD exists in CFD set for $ k=r_i $ and does not exist in the CFD set for $ k=r_{i+1} $, we can set the amount of supporters for the CFD as $ {\rm int}[(r_i+r_{i+1})/2] $. But if it reaches the final rank, we use 80\% of n as its supporters.

\subsubsection{Discovery of the conflict between two CFDs}
\label{section: disc confict}

When decide whether there is conflict between two CFDs, we design a deciding rule- Linking Rule. By such rule, we can decide whether to set a line between two CFD nodes to show conflict between them. We discuss linking rules in two cases with two CFDs and multiple CFDs.

\textbf{For two CFDs:} $ C_1 $: $ (X_1\rightarrow A_1,(t_P [X_1]\parallel t_1)) $; $ C_2 $: $ (X_2\rightarrow A_2,(t_P [X_2]||t_2)) $. We firstly decide whether there is conflict between $ C_1 $ and $ C_2 $. We can divide the problems into three situations according to the relationship between $ X_1 $ and $ X_2 $. Without generality, we suppose $ |X_1|\leq |X_2| $.

\textbf{T1 $ \boldsymbol{X_1\subset X_2} $.} Only if $ A_1 $ is the same as $ A_2 $, can there be conflict.

\textbf{T1-1.} If $ C_1 $ and $ C_2 $ are both constant CFD, then only when $ t_p[X_1C_1]=t_p[X_1C_2] $ but the $ t_p[A_1C_1]\not= t_p[A_2C_2] $, is there a conflict between them. Here, $ t_p[X_1C_1] $ and $ t_p[X_1C_2] $ means the range of the attribute set $ X_1 $ in $ C_1 $ and $ C_2 $ which is same for other attributes. e.g., $ C_1 $: $ (F,G\rightarrow A,(1,2\parallel 1)) $ and $ C_2 $: $ (F,G,H\rightarrow A,(1, 2,3\parallel 3)) $

\textbf{T1-2.} If $ C_1 $ and $ C_2 $ are both variable CFD, then when ``\underline{\quad}'' is for different attribute, there can be conflict. There must be at least one attribute $ r_i $ in $ X_1 $ that is a variable attribute with ``\underline{\quad}'' for its range and a constant data for $ r_i $ in $ X_2 $ to make a conflict. e.g.,$ C_1 $: $(F,G\rightarrow A,(\underline{\quad}, 2\parallel\underline{\quad}))$ and $ C_2 $: $ (F,G,H\rightarrow A,(1, 2,\underline{\quad}\parallel\underline{\quad}) ) $. We know that for $ C_1 $, when $ F $ is 1, $ A $ is a constant. However, from $ C_2 $, we know when $ F=1 $ and $ H $ is changed, $ A $ is changed with $ H $.

\textbf{T1-3.} If $ C_1 $ is a variable and $ C_2 $ is a constant, there cannot be conflict between two CFDs. Because when $ X1\subset X $ and $ C_1 $ is variable, the $ C_2 $ can be a kind of situation of it.

\textbf{T1-4.} If $ C_1 $ is a constant and $ C_2 $ is a variable, when $ t_p[X_1C_1] = t_p[X_1C_2] $, but in $ X_2 $ there is a variable attribute not in $ X_1 $. This results in that when $ A_1=A_2 $, $ A_2 $ is more general than $ A_1 $. Then there must be a conflict. e.g. , Rules $ C_1 $: $ (F,G\rightarrow A,(1, 2\parallel 2)) $ and $ C_2 $: $ (F,G,H\rightarrow A,(1, 2,\underline{\quad}\parallel\underline{\quad})) $ We know that when $ F=1 $, $ G=2 $, $ A $ in $ C_1 $ should be a constant. However, it is a variable with different $ H $. Then $ C_1 $ and $ C_2 $ conflict.

\textbf{T2 $ \boldsymbol{X_1=X_2} $.} Only if $ A_1 $ is the same as $ A_2 $, can there be conflict.

\textbf{T2-1.} If $ C_1 $ and $ C_2 $ are both constant CFD, then only $ t_p[X_1C_1]=t_p[X_2C_2] $ but $ t_p[A_1C_1]\not=t_p[A_2C_2] $ can imply a conflict.e.g., $ C_1 $: $ (F,G\rightarrow A,(1, 2\parallel 1)) $ and $ C_2 $: $ (F,G\rightarrow A,(1,2\parallel 3)) $.

\textbf{T2-2.} If $ C_1 $ and $ C_2 $ are both variable CFD, then when ``\underline{\quad}'' is for different attribute, there can be conflict. e.g., $ C_1 $: $ (F,G\rightarrow A,(\underline{\quad}, 2\parallel\underline{\quad})) $ and $ C_2 $: $ (F,G \rightarrow A,(1,\underline{\quad}\parallel\underline{\quad})) $. For $ C_1 $, when $ F $ is 1, $ A $ is a constant. However, from $ C_2 $, when $ F=1 $, and $ G $ is different, the $ A $ can change.

\textbf{T2-3.} If $ C_1 $ is variable and $ C_2 $ is constant, it cannot generate conflict for $ C_2 $ can be treated as a special situation for $ C_1 $.

\textbf{T3 $ \boldsymbol{X_1\subset X_2} $.} In this case, there can be conflict only when $ A_1 $ is the same to $ A_2 $. If $ X_1\cap X_2=\emptyset $ , it is unnecessary to compare these CFDs. So $ X_1\cap X_2=\emptyset $ should be satisfied to find a conflict. We suppose $ X_1\cap X_2=E $ , where $ E $ is attribute set shared by $ X_1 $ and $ X_2 $.

\textbf{T3-1.} If the $ C_1 $ and $ C_2 $ are both constant CFD, there cannot be conflict between the two CFDs. Since they cannot include the situation of the other, there cannot be a conflict.

\textbf{T3-2.} If $ C_1 $ and $ C_2 $ are both variable CFD, then when ``\underline{\quad}'' is the range for all the attributes in one CFD and in another CFD there are attributes in E whose range is fixed with attributes not in E whose range is fixed. e.g.,$ C_1 $: $ (F,G,H\rightarrow A,(\underline{\quad}, \underline{\quad},\underline{\quad}\parallel\underline{\quad})) $ and $ C_2 $: $ (F,L,Q\rightarrow A,(1,2, \underline{\quad}\parallel\underline{\quad})) $. For $ C_1 $ when $ F=1,G=2,H=8 $, $ A $ is a constant. From $ C_2 $, we can know that when $ F=1,G=2,H=8 $ but $ Q\not=H $. Thus, $ A $ in $ C_2 $ is different.

\textbf{T3-3.} If one CFD is a variable and another CFD is a constant, there cannot be conflict between them, since constant CFD can be seen as a special case for the other CFD when $ X_1\not\subset X_2 $.

\begin{figure}[h]
	\centering
	\includegraphics[width=4cm]{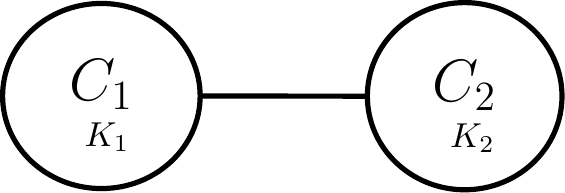}
	\caption{Build a line between $ C_1 $ and $ C_2 $  when there is conflict.}
	\label{fig: line}
\end{figure}

\begin{figure}[h]
	\centering
	\includegraphics[width=8cm]{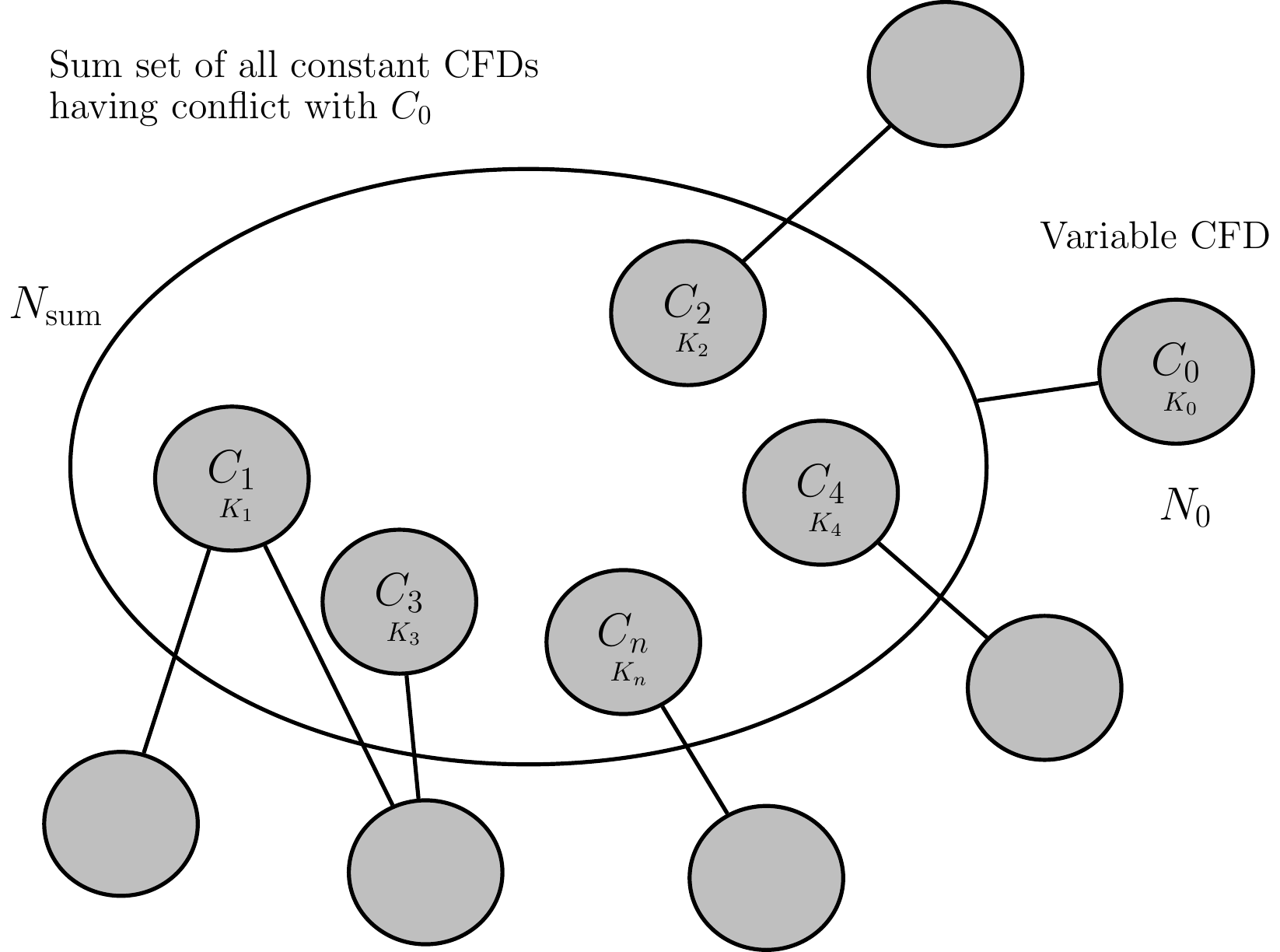}
	\caption{Put constant CFDs together and leave the variable CFD alone.}
	\label{fig: lines}
\end{figure}

\textbf{For more than two CFDs} When we find conflict among more than two CFDs, we can integrate the conditions of generating conflict into a rule M1. The only condition generating conflict is that for a variable CFD, there are no less than two constant CFDs showing that it is wrong. We suppose there are three CFDs-sets, which contains a variable CFD and two constant CFDs.
\begin{align*}
	&C_1: (X_1\rightarrow A_1,(t_P[X_1]\parallel t_1)); \\
	&C_2: (X_2\rightarrow A_2,(t_P[X_2]\parallel t_2)); \\
	&C_3: (X_3\rightarrow A_3,(t_P[X_3]\parallel t_3)).
\end{align*}

\textbf{M1.} If there is conflict among them, $ A_1$, $A_2 $ and $ A_3 $ must be the same attribute. There is at least one attribute shared by $ X_1$, $ X_2 $ and $ X_3 $. We denote such attribute set by $ U $. Meanwhile, in one CFD, the range of $ U $ is ``\underline{\quad}'' which means variable and $ A $ in this CFD is also a variable. But in other CFDs, $ U $ and $ A $ are both constants. Then we suppose that $ C_1 $ is a variable while $ C_2 $ and $ C_3 $ are both constant. We find that we can synthesize different conditions: $\{X_1\not\subset X_2,X_1\subset X_3 X_1=X_2 \} $ and all other conditions in one rule: Let $ E= X_1\cap X_2\cap X_3 $, then if one attribute in $ E $ is``\underline{\quad}'' for $ C_1 $ and it is the same constant data for $ C_2 $ and $ C_3 $. To the other attributes in $ E $, the range of them is the same for three CFDs. Then if $ A $ in $ C_2 $ is different from $ C_3 $, there is a conflict.

\textbf{According to Rule M1:} In the case that $ X_1\not\subset X_2\not\subset X_3 $, consider three rules $ C_1 $: $ (F,G,H\rightarrow A,(\underline{\quad}, 1,2\parallel\underline{\quad})) $, $ C_2 $: $ (F,G,Q \rightarrow A,(1,1,4\parallel 1)) $ and $ C_3 $: $ (F,G,W \rightarrow A,(1,1,$  $4\parallel 2)) $. We can discover from $ C_1 $ that when $ G=1 $ and $ H=2 $, $ F $ can decide $ A $. However, in $ C_2 $ and $ C_3 $, we discover that when $ G=1,H=2 $, $ F $ cannot decide $ A $. In the case that $ X_1\subset X_2=X_3 $, consider rules $ C_1 $:$ (F,G\rightarrow A,(\underline{\quad},2\parallel\underline{\quad}))$, $ C_2 $:$ (F,G,Q\rightarrow A,(1,1,4\parallel 1)) $ and $ C_3 $:$ (F,G,W\rightarrow A,(1,1,4\parallel 2)) $; We can discover from above discussions that $ F $ cannot decide $ A $ when $ G=2 $ by itself. There is a conflict between them.

For all the different relationship among $ X_1 $, $ X_2 $ and $ X_3 $, we can see that the rule M1 can work for all the conditions. Because if we want to see the conflict among more than two CFDs, we can only get the conflict when one CFD is variable and the others are constant. However, the constant CFDs of the others cannot show the variable CFD. Therefore, no matter what kind of relationship among $ X_1 $, $ X_2 $ and $ X_3 $, we can always check conflict by M1.

For the conflict between two CFD $ C_1 $ and $ C_2 $, we can just build a line between them like Figure 3. However, for more than two CFD nodes, we need to put constant CFDs together as a new node and leave variable CFD alone. The weight a combined node Nsum in Figure 4 is $ K_{\rm sum}=K_1+K_2+K_3+K_4+\cdots+K_n$. Other CFDs having conflict with $ C1,C2,\cdots,Cn $ also have conflict with the Nsum in Figure 4.

\subsubsection{MWID: Maximal weight independent discovery}
\label{section: mwid}

Since the premise for our method of finding CFDs from big dirty data is that the large part of data set is clean, we attempt to find a maximum compatible rule subset. Then with the maximum subset, we can cover the largest number of tuples in big data set. Since the maximal independent discovery problem, an NP-Hard problem \cite{Garey1986Computers}, is a special case with this problem with the weight of each vertex as 1, the maximal weight independent set (MWIS) discovery problem is also an NP-hard problem.

To find the MWIS from an undirected graph, we design an algorithm, MWID by improving algorithm FastMIS in \cite{Bollob1998Modern}. FastMIS introduces a randomized algorithm to find maximal independent set (MIS). It compute a MIS in a distributed way. However, the MIS computed by it contains the largest number of nodes and does not consider the weight. Therefore, we modify some steps in the FastMIS to generate the MWIS. In FastMIS, there are three steps to get MIS. The first two steps are as following.

\begin{enumerate}
	\item[a.] Each node $ v $ chooses a random value $ r(v)\in [0,1] $ and sends it to its neighbors.
	
	\item[b.] If $ r(v)< r(w) $ for all neighbors $ w\in N(v) $, node $ v $ enters MIS and informs its neighbors.
\end{enumerate}

The two steps make sure that if a node $ v $ joins the MIS, then $ v $'s neighbors do not join MIS at the same time. By this method, the node with the globally smallest value will always join the MIS to find maximal independent set, which has been proved in \cite{Bollob1998Modern}. When considering the weight, we need to ensure the nodes with larger weight having more possibility to join the MIS. So we cannot let the range to select a random value keeps the same. The modified algorithm is as following.

\begin{enumerate}
\item[a'.] Compare the weight $ w_v $ of each node $ v $ with each weight wn of its neighbors. If $ w_v>w_n $, we generate a random value $ r(v)\in [0,0.5) $ and give it to this neighbor. If $ w_v<w_n $, we generate a random value $ r(v)\in (0.5,1] $ for its neighbor. When $ w_v=w_n $, we set $ r(v)=0.5 $ and sends it to this neighbor.

\item[b'.] If $ r(v)\times w_v>r(w)\times w_n $ for all neighbors $ w\in N(v) $, node $ v $ enters MIS and informs its neighbors.
\end{enumerate}

By this way, we can make the node with larger weight and fewer neighbors be added more easily to get MWIS.

\textbf{Algorithm Description} The pseudo code is shown in Algorithm 4. The algorithm operates in synchronous rounds, grouped into phases. We introduce a single phase with pseudo code. The input is an adjacent matrix $ A $ of the WCFD graph. Then we set the variable scale as the number of nodes in the graph. With the scale, we get an array $ M[scale] $ to record found maximal weight independent set (Line 1). In a single phase, for each node $ v $, we compare the weight of $ v $ with the weight of each of its neighbors. If the weight of v is larger, we generate a random number $ k $ from $ [0,0.5) $ and give it to $ w $ (Line 5-6). If the weight of $ v $ is equal to $ w $, we give $ 0.5 $ to $ w $ (Line 7-8). If the weight of $ v $ is smaller, we generate $ k $ from $ (0.5,1] $ and assign it to $ w $ (Line 9-10).

After we generate random numbers, for each node $ v $, we set a label as 1. (Line 12) We compare the random number of the neighbor of $ v $ with $ r(v) $. If the $ r(w)\times w_n $ is no smaller than $ r(v)\times w_v $, we let label be 0 (Line 15). After we finish the comparing, if label is still 1, we add $ v $ to $ M[scale] $, move $ v $ and all edges adjacent to $ v $ (Line 17-18). Then, we start another phase when there is node in $ G $ (Line 19-20).

\textbf{Time complexity Analysis} Since the modified algorithm just add the process of comparing the weight, we can use the constraints provided in \cite{Bollob1998Modern} to help analysis the time complexity. The probability in a single phase at least a quarter of all edges are removed is at least $ 1/3 $. Then with less than $ 1/3 $ for the probability , many (potentially all) edges are removed. And the probability that less than $ 1/4 $ of edges are removed is more than $ 2/3 $. Therefore, the removed edges is about $ 1/3\times 1+2/3\times 1/4=1/2 $.

Since at least $ 1/3 $ of phases are ``good'' and can remove at least a quarter of edges, we need $ \log 4/3(m) $ good phases, where the m is the amount of the edges in $ G $. The last two edges will certainly be removed in the next phase. And consider the extra time of comparing for each node, we get the $ (3\log 4/3(m)+1)\times c \in O(\log n)$ as time complexity, where c is a number no larger than the number of nodes in $ G $.

\begin{algorithm}
	\caption{MWID}
	\hspace*{0.02in}{\bf Input:}
	A graph $ \boldsymbol{G} $. The \textbf{scale} for the number of points in the graph.\\
	\hspace*{0.02in}{\bf Output:}
	The maximal weight independent set $ \boldsymbol{M[{\rm scale}]} $.
	\begin{algorithmic}[1]
		\State $ M[{\rm scale}]=\{\}; $
		\For{$ v=1 $ to scale}
			\For{each neighbor $ w $ of $ v $}
				\State \textbf{switch} {cmp$ (v,w) $}
				\State \quad \textbf{case} 1: $ k= $rand$ [0,0.5] $;
				\State \hspace{1.5cm}$ r(w)=k; $
				\State \quad \textbf{case} 2: $ k= 0.5; $
				\State \hspace{1.5cm}$ r(w)=k; $
				\State \quad \textbf{case} 3: $ k= $rand$ [0.5,1] $;
				\State \hspace{1.5cm}$ r(w)=k; $
			\EndFor
		\EndFor
		\For{$ v=1 $ to scale}
			\State label$ =1 $;
			\For{each neighbor $ w $ of $ v $}
				\If{$ r(v)\times w_v\leq r(w)\times w_n$}
					\State label$ =0 $;
				\EndIf
			\EndFor
			\If{label$ =1 $}
				\State add $ v $ to $ M[{\rm scale}] $;
				\State remove $ v $ and all edges adjacent to $ v $ from $ G $;
			\EndIf
		\EndFor
		\If{there is node in the $ G $}
			\State go to 2;
		\EndIf
	\end{algorithmic}
	\hspace*{0.02in}{\bf Note:}
	cmp$ (v,w) $ is to compare weight of $ v $ with weight of $ w $. If weight of $ v $ is bigger, it returns 1. If weight of $ v $ equals the $ w $, it returns 2. As for the condition $ v $ is smaller, we get 3.
	\label{algor: MWID}
\end{algorithm}

\section{Selection of Parameters}
\label{section: sel of para}

In CFD discovery algorithms, following parameters should be known.

\begin{enumerate}
	\item[(1)] The high limit of number of groups extracted from data set $ (n) $;
	\item[(2)] The amount of the items in each group $ (m) $;
	\item[(3)] The least number of same attributes to decide whether a tuple is similar to others $ (b) $;
	\item[(4)] The highest number of same attributes a special item has with popular items $ (b') $ and
	\item[(5)] threshold we set when find CFDs $ (e) $.
\end{enumerate}

In this section, we discuss the parameter selection methods based on user requirements. The requirements include 4 dimensions. These dimensions have trade-off. The four dimensions of CFD discovery methods are as follows.

\begin{enumerate}
	\item[(1)] \textbf{The time of finding CFDs (CW):} We always wish it can cost less time to find CFDs. The time of our algorithm is the sum time of sampling and finding CFDs from samples.
	
	\item[(2)] \textbf{The quality of CFDs (QC):} We want to improve quality of CFDs making it fit to CFDs found on clean data set. This dimension is described by percentage of CFDs found from clean data set covered by those found in dirty CFDs.
	
	\item[(3)] \textbf{The time of cleaning data with our CFDs (CC):} Another target of CFDs discovery is to clean data efficiently. We measure the time by cleaning data with CFD set.
	
	\item[(4)] \textbf{The quality of cleaning (denoted by QD):} Meanwhile, we need to ensure our CFDs clean the data effectively. We use the percentage of dirty items in data set found by CFD set to measure.
\end{enumerate}

For these parameters, a user could select a dimension as the one with the highest priority. We denote such dimension as OD. For others, the tolerate range are set. As an example, a possible demand description is as follows.

\begin{enumerate}
	\item[(1)]	We want the discovery time to be as small as possible.
	\item[(2)]	The lowest quality of CFDs we allow is 96\%.
	\item[(3)]	The longest time of using CFDs to clean our data set we allow is 3 hours.
	\item[(4)]	The lowest quality of the cleaning result is 95\%.
\end{enumerate}

We designed experimental methods to obtain these parameters according to these requirements. The data for this experiment are generated by the TPC-H. We generate a small tuple set with the same amount of attributes as those tuples in big data to be cleaned, which can make our data similar to big data ensure the functions found from our data can work well with big data.

In each experiment, we vary one parameter p1 with the others unchanged and use our method to find CFDs and clean data set by the discovered CFDs. Then we measure the amount for four aims. With some rounds of experiments, we draw a curve about the four goals and p1. By fitting such curve, we can get four functions between CW, QC, CC, QD and the parameter $ p_1 $.

With the same process, we get the functions between CW, QC, CC, QD and other parameters. We denote the relation function between $ p_i $ and CW as $ f_{cw}(p_i) $ which is similar to QC, CC and QD.

Finally, we integrate all the functions to get equations:
\begin{align*}
	{\rm CW}&=\sum_{i=1}^{r}f_{\rm CW}(p_i)/r;\\
	{\rm QC}&=\sum_{i=1}^{r}f_{\rm QC}(p_i)/r;\\
	{\rm CC}&=\sum_{i=1}^{r}f_{\rm CC}(p_i)/r;\\
	{\rm QD}&=\sum_{i=1}^{r}f_{\rm QD}(p_i)/r.
\end{align*}

Then, we can formalize the problem as an optimization problem with description of one of CW, QC, CC and QD as optimization goal, and other equations as well as the input range requirements as the constraint. By applying Simplex Algorithm, we get the optimized solution for this problem to determine the parameters. For example, in our experiment it is solved as following.

Then, we can get
\begin{align*}
	&{\rm QC}=\sum_{i=1}^{r}f_{\rm QC}(p_i)/r;\\
	&{\rm CC}=\sum_{i=1}^{r}f_{\rm CC}(p_i)/r;\\
	&{\rm QD}=\sum_{i=1}^{r}f_{\rm QD}(p_i)/r;\\
	&n\times m<N/50000;\\
	&50\%<e<1;\\
	&n\geq 2;\\
	&2<b'<b<15;\\
	&QC\geq 0.95;\\
	&CC\leq 130;\\
	&QD\geq 0.95;\\
	&CW\leq 130.
\end{align*}

Using the Simplex Algorithm, we get parameter set $ \{n=11, m=4000, e=0.9, b'=4, b=9\} $ which is proved as a best choose in Section 7.2.2.

\begin{figure*}[h]
	\centering
	\subfloat[\#tuple VS. response time]{
		\includegraphics[width=.3\textwidth]{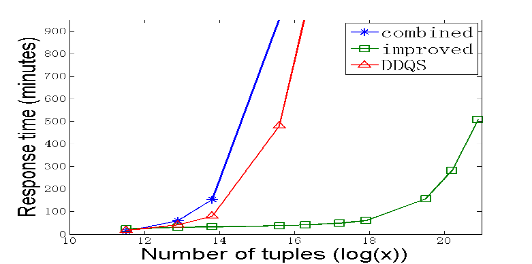}}\hfill
	\subfloat[\#attribute VS. response time]{
		\includegraphics[width=.3\textwidth]{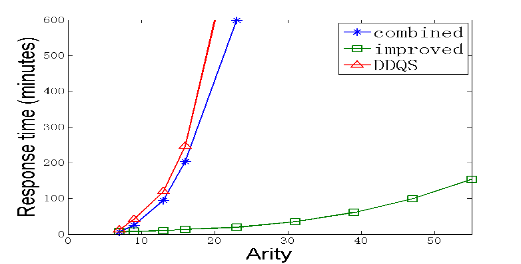}}\hfill
	\subfloat[\#tuple VS. inconsistent rate]{
		\includegraphics[width=.3\textwidth]{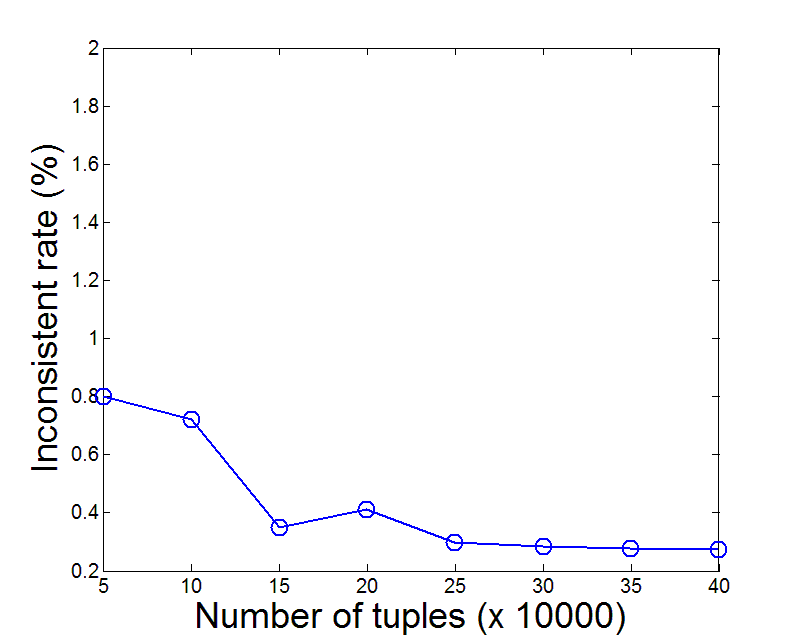}}\hfill\\
	\subfloat[\#attribute VS. inconsistent rate]{
		\includegraphics[width=.3\textwidth]{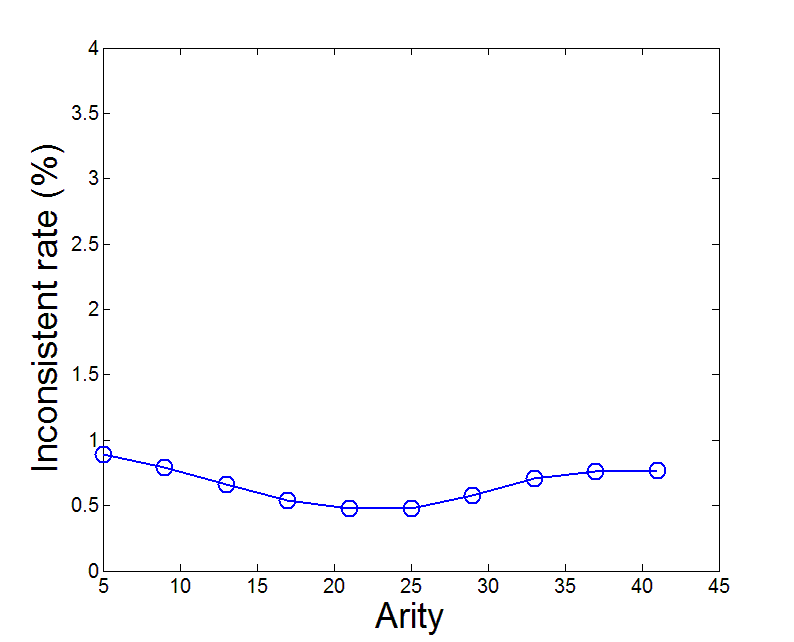}}\hfill
	\subfloat[The effect of our method on SUSY]{
		\includegraphics[width=.3\textwidth]{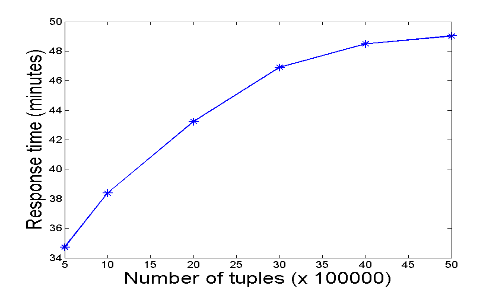}}\hfill
	\subfloat[The effect of our method on Article]{
	\includegraphics[width=.3\textwidth]{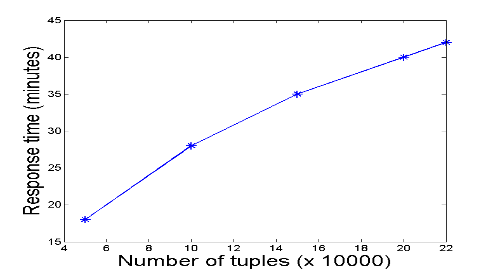}}\hfill
	\caption{Experiment result}
	\label{fig: exp result}
\end{figure*}

\section{Experiments}
\label{section: exp}

To verify the efficiency and effectiveness of proposed algorithms, we perform extensive experiments in this section.

\subsection{Experimental Settings}
The experiments were conducted on both synthetic data sets and real-life data. We firstly use synthetic data generated by TPC-H, which is a decision support benchmark and can generate data in any size to evaluate performance and scalability of our algorithm and optimality of the method of choosing parameters. We also used real data sets name from the UCI machine learning repository\footnote{\url{http://archive.ics.uci.edu/ml/}} and dblp\footnote{\url{http://dblp.uni-trier.de/}} namely SUSY Data Set, Article Data Set showed in Table \ref{tab: para of real dataset} to check effect on real data of the method.

\begin{table}
	\centering
	\caption{The Parameters of Real Dataset}
	\label{tab: para of real dataset}
	\begin{tabular}{|c|c|c|}
		\hline
		Dataset & Arity & size(\# of tuples) \\
		\hline
		SUSY Data Set & 18 & 5,000,000 \\
		\hline
		Article Data Set & 23 & 220,000 \\
		\hline
	\end{tabular}
\end{table}

All algorithms are implemented in Java. The program has been tested on a PC with Intel kurui i7 4770 (3.4GHZ) and 8 GB of memory running Ubuntu operating system. Each experiment was repeated three times and the average is reported.

We use following parameters to evaluate proposed algorithms.

\begin{enumerate}
	\item[(1)]	the time of finding CFDs from the data set.
	\item[(2)]	the quality of discovered CFDs measured by the percentage of the standard CFD set got from clean data found by us using our method on dirty data changed from the clean data.
	\item[(3)]	the time of cleaning data with discovered CFDs.
	\item[(4)]	the quality of data cleaned by discovered CFDs measured by using CFDs to clean dirty data changed from clean data and measure the percentage of dirty tuples found by our CFDs
\end{enumerate}

Also, to test the optimality of the method we choose parameters in Section 6, we compare the effect of different choice of parameters using the controlling variable method.

\subsection{Experiment Result}
\subsubsection{Performance and scalability experiments}
We show performance and scalability of our algorithm by different data size and arity. We use CFDs discovered on clean data by combined algorithm (the original CFD discovery algorithms) as baseline. And we modify 8\% of generated data to make it dirty and use the dirty data to test.

\noindent\textbf{A.	Efficiency Experiments}

\noindent\textbf{a.	The Impact of Tuple Number}

We varied tuple number from 100K to 1.2 Billion with 16 attributes for each tuple. The maximal data size is 210G. The discovery time is shown in Figure \ref{fig: exp result}(a), where ¡°combined¡± refers to the original algorithms for CFD discovery and the ¡°improved¡± refers to the algorithm proposed in this paper and DDQS as the experiment for \cite{Fei2008Discovering}. The horizontal axis is in logrithm scale. The reason we do not use x directly is that other two algorithms can only work for small data and we want to use very big size to show data size our algorithm can deal with. From this figure, we have following observations.

\begin{enumerate}
	\item[(1)] When DBSIZE (the size of database) is smaller than 70K, the response time of our method is higher than that of combined and DDQS algorithm. This shows that due to time of sampling and conflict resolution, our method behave bad with small data.
	\item[(2)] When DBSIZE$ > $70K, original algorithms and DDQS find CFDs more slowly. It is because that when DBSIZE is big enough, it will cost more time to find CFDs than sampling and combining different sets.
	\item[(3)] The increasing speed of other two lines is higher than ours. This shows that our algorithm works better for big data.
\end{enumerate}

\noindent\textbf{b. The Impact of Attribute Number}

We vary the attribute number from 7 to 55 and fix the tuple number 1,000K. From Figure \ref{fig: exp result}(b), we can find the two lines are both index functions which shows the index form of $ r $ in the objective function for O1. However, the line for our method is more gentle. When arity is smaller than 23, combined algorithms is faster because there is no need to sample. When it comes to the arity of more than 25 attributes, our method outperforms combined algorithm. Our method can save over 20 percentage of the time when arity is 55. Compared with other lines in graph, we can forecast our method does better for the data with more attributes.

\begin{table}
	\centering
	\caption{The Optimization of $ n $}
	\label{tab: opti of n}
	\begin{tabular}{|c|c|c|c|c|c|}
		\hline
		$ n $ & 5 & 8 & 11 & 14 & 17 \\
		\hline
		CW & 75min & 83min & 92min & 123min & 157min \\
		\hline
		QC & 0.972 & 0.99 & 0.994 & 0.994 & 0.993 \\
		\hline
		CC & 102min & 115min & 123min & 132min & 139min\\
		\hline
		QD & 0.83 & 0.94 & 0.96 & 0.97 & 0.98\\
		\hline
	\end{tabular}
\end{table}

\begin{table}
	\centering
	\caption{The Optimization of $ m $}
	\label{tab: opti of m}
	\begin{tabular}{|c|c|c|c|c|c|}
		\hline
		$ m $ & 2,000 & 4,000 & 6,000 & 8,000 & 10,000 \\
		\hline
		CW & 99min & 102min & 121min & 155min & 190min \\
		\hline
		QC & 0.985 & 0.994 & 0.995 & 0.995 & 0.997 \\
		\hline
		CC & 95min & 123min & 132min & 141min & 151min\\
		\hline
		QD & 0.92 & 0.96 & 0.97 & 0.97 & 0.98\\
		\hline
	\end{tabular}
\end{table}

\begin{table}
	\centering
	\caption{The Optimization of $ e $}
	\label{tab: opti of e}
	\begin{tabular}{|c|c|c|c|c|c|}
		\hline
		$ e $ & 0.6 & 0.7 & 0.8 & 0.9 & 0.97 \\
		\hline
		CW & 120min & 107min & 99min & 92min & 86min \\
		\hline
		QC & 0.951 & 0.979 & 0.987 & 0.994 & 0.985 \\
		\hline
		CC & 150min & 138min & 129min & 123min & 120min\\
		\hline
		QD & 0.86 & 0.89 & 0.93 & 0.96 & 0.94\\
		\hline
	\end{tabular}
\end{table}

\begin{table}
	\centering
	\caption{The Optimization of $ b $}
	\label{tab: opti of b}
	\begin{tabular}{|c|c|c|c|c|}
		\hline
		$ b $ & 7 & 9 & 10 & 12 \\
		\hline
		CW & 81min & 92min & 103min & 124min\\
		\hline
		QC & 0.99 & 0.994 & 0.993  & 0.991 \\
		\hline
		CC & 102min & 123min & 128min & 131min\\
		\hline
		QD & 0.94 & 0.96 & 0.95 & 0.94\\
		\hline
	\end{tabular}
\end{table}

\begin{table}
	\centering
	\caption{The Optimization of $ b' $}
	\label{tab: opti of b'}
	\begin{tabular}{|c|c|c|c|c|}
		\hline
		$ b' $ & 2 & 4 & 5 & 6 \\
		\hline
		CW & 95min & 92min & 101min & 104min\\
		\hline
		QC & 0.972 & 0.994 & 0.99  & 0.983 \\
		\hline
		CC & 147min & 123min & 117min & 109min\\
		\hline
		QD & 0.96 & 0.96 & 0.95 & 0.93\\
		\hline
	\end{tabular}
\end{table}

\noindent\textbf{B.	Precision Experiments}

We add the CFDs found by the methods in the Gather of Original Algorithms (discover CFDs using the original algorithms) to get a standard set of CFDs. By computing the percentage of the standard set of CFDs which are not covered by our CFDs, we evaluate the precision of our algorithm.

\noindent\textbf{a.	The Impact of Tuple Number}

We varied the tuple number from 50K to 400K tuples with 16 attributes for each item. The line shows the percentage of the CFDs found by the methods in the combined algorithm. The y-axis represents the percentage of CFDs, which are generated by the combined algorithm, covered by CFDs from our algorithm and is called inconsistent rate.

From Figure \ref{fig: exp result}(c), we have following observations. (1) The inconsistent rate is less than 1\%. This shows our method can always find CFDs which are similar to those found by original methods. (2) When tuples are large, inconsistent rate is smaller than those for small number of tuples. This proves our method is more scalable for big data. (3) When DBSIZE is larger than 30K, we can use CFDs we find as a standard set of CFDs due to precision $ >96.5\% $.

\noindent\textbf{b.	The Impact of Attribute Number}

We varied the arity from 5 to 42 by fixing the tuples as 100K. From Figure \ref{fig: exp result}(d), we can find that the CFDs found by our method are very similar to those standard CFDs no matter what the arity is. When arity is 25, effect is the best of all. However, when the arity is too big or too small, the result of our CFDs becomes worse. As for the too big arity, wrong items may be concentrated when we change items by ourselves. And we will obtain some similar wrong samples, which makes our CFDs seems wrong. This is caused by people and to real data set, this will not happen. For too small arity, we can find CFDs are few making base small.

\subsubsection{Optimality of Parameters}

To show the effectiveness of our parameter selection method, we change one parameter with others unchanged and compare the four dimensions.  We use TPC-H to generate data set and use the default constraint for parameters as $ \{n=11, m=4000, e=0.9, b'=4, b=9\} $ and only change a parameter in these parameters.  In each set of experiments, we compare the results with parameters $ \{n=11,m=4000, e=0.9, b'=4, b=9\} $ obtained by our method and those with different values of the optimization goal. In each table, column in grey background is the result with optimal parameters.

The results with $ n $ as the optimization goal is shown in Table \ref{tab: opti of n}. From the results, we find that only when $ n $ is 11, the four aims can be satisfied and the CW is as short as impossible.  When it is large, the time for finding CFDs will increase and when it is too small, the CFDs we find will be not so accurate. From Table \ref{tab: opti of m}, we can see that the optimization result for m is 4,000 items in each group of sampling. When it is too large, we can find that the time of finding CFDs and cleaning data set is too large. When it is too small, QC will be worse and result of cleaning is bad.

For the $ e $, we know from Table \ref{tab: opti of e} is 0.9, we can get the best results. When $ e $ is too small, we will find many wrong CFDs and spend a lot of time to clean data. When $ e $ is too large, we can be too strict to tolerant wrong tuples and leave CFDs.

The results with $ b $ as the optimization goal is shown in Table \ref{tab: opti of b}. We can see that when it is too small or too large, the CFDs we find will be not so accurate.

To the $ b' $, we can find from Table \ref{tab: opti of b'} that we should choose 4 as its amount. Although when it is 5, we can also accept the result, while the CW being smaller than 4 make us abandon it.

\textbf{Conclusion:} Above all, we can see that our selection of {$ n=11,m=4000, e=0.9, b'=4, b=9 $} can work best to make CW as small as possible and satisfy the low limits for other aims.

\subsubsection{Test on Real Data}
We use real data sets from the UCI machine learning repository and dblp namely SUSY Data Set, Article Data to test the effectiveness of our method on real data.

Figure \ref{fig: exp result}(e) shows time of discovering CFDs from SUSY Data Set. We use different part of the data set to test the scalability. We observe that when we increase the tuple number, time increases around linear with the data size. It shows that our method can deal with real big data in a linear effect. Largest size of data is 6.2G.To data set from dblp in Figure \ref{fig: exp result}(f), we vary DBSIZE from 50K to 200K. The largest size of data is 2.1G. We also find that the time of finding CFDs increases linear with the data size. This also shows the linear cleaning effect of our method on real big data. Thus, the experimental results on real data verify analysis results.

\section{Related Work}

In this section, we give a brief survey of related work.

\noindent\underline{Concept of Dependencies} To improve data consistency, a set of data quality rules are often created. Once the inconsistent items exist in the database, some rule will be violated. Thus, errors are discovered and revised correspondingly. It has been generally admitted that the integrity constraints should be used as a data quality detection rule to improve data consistency \cite{Fan2011Discovering}\cite{Fei2008Discovering}\cite{Fan2008Conditional}\cite{Cormode2009Estimating}\cite{Batini2006Data}.

The theory of conditional dependencies including CFDs \cite{Chomicki2006Consistent} and conditional inclusion dependencies (CINDs) \cite{Bertossi2006Consistent} respectively develops the traditional functional dependencies and inclusion dependencies to capture the common mistakes in realistic data.

For the conditional functional dependencies, \cite{Chomicki2006Consistent} and \cite{Bertossi2006Consistent} study the problems including the consistency, logical implication and axiomatic for dependency language. Based on the \cite{Chomicki2006Consistent} and \cite{Bertossi2006Consistent}, a variety of extensions for conditional dependencies have been proposed in \cite{Fan2008Dependencies}, \cite{Rahm2010Data}, \cite{Fan2008Conditional} and \cite{Bravo2007Extending} to develop the capacity of illustrating conditional dependencies without the growth of the computational complexity.

\noindent\textbf{(2) Rules Mining}

To use dependencies as data quality rules, the first problem is how to obtain these dependencies. \cite{Bravo2008Increasing} and \cite{Chen2009Analyses} design the automatic discovery algorithms for finding CFDs. However, the algorithms in them both need to work on a clean and representative data set. In \cite{Bollob1998Modern}, it can discover CFDs from dirty data set. However, the process can be hardly finished for the data set with size larger than memory. Meanwhile, the complexity of algorithm in \cite{Bollob1998Modern} is large for big data. For CINDs, no automatic discovery algorithm has been proposed. In [23], it can find CINDs based on dirty data set. What is more, [21] can discover the Integrity Constraints (ICs) for data set which is not clean automatically. And both [21] and [23] do not consider the problem of memory size.

\noindent\textbf{(3) Algorithms used in rules mining for big data}

To find rules on big data, people give many methods. In [22], it proposes an ¡°on-demand¡± algorithm to generate an optimal tableau for given CFDs. In \cite{Vitter1985Random}, it uses a variety of sampling and sketching techniques to estimate the confidence of a CFD with a small number of passes (one or two) over input using small space.

\noindent\textbf{(4) Rules Analysis and Optimization}

Since data quality rule set may contain conflicts, we need to find out consistent constraint rules (i.e., maximum consistent subset) as data quality rules. The computational complexity of this problem is very high. For CFDs, finding the maximum consistent subset of rules is proven to be NP-complete~\cite{Kalavagattu2008MINING}. When we consider both the CFDs and CINDs, this problem is undecidable. Thus, approximate algorithms to calculate maximum consistent subset for CFDs have been proposed in \cite{Chomicki2006Consistent}.

\noindent\textbf{(5) Error Detection}

Error detection means capturing data errors by the consistent subset of the data quality rules. It finds the tuples in violation to the data quality rules. \cite{Chomicki2006Consistent} and \cite{Rahm2010Data}, for centralized storing relational databases, are designed to detect the tuples in violation of CFDs and CINDs automatically based on SQL query processing.

\section{Conclusions}
\label{section: con}

For big data, rule discovery for data cleaning brings new challenges. To solve this problem, we propose a novel CFD discovery method for big data. For the volume feature of big data, we design a sampling algorithm to obtain typical samples by scanning data only once. Then, on the sample set, we adapt existing CFD discovery algorithms to tolerate the fault. By integrating these modified methods, a preliminary CFD set is discovered. To increase the quality in the discovered rule set, we design a graph-based rule selection algorithm. With the consideration that a user may have different requirements for CFD discovery, we propose a strategy to select parameters according to requirements of users. Experimental results demonstrate that proposed algorithm is suitable for big data and outperforms existing algorithms. Future work includes extending the proposed algorithm to parallel platform and modifying proposed algorithm to discover other rules.

\bibliographystyle{abbrv}
\bibliography{paper}

\end{spacing}{1.0}
\end{document}